\documentclass{article}

\newcommand{\sv}[1]{#1}
\newcommand{\lv}[1]{}

\newcommand{\icml}[1]{}
\newcommand{\arxiv}[1]{#1}

\newcommand{\comment}[1]{}

\arxiv{\usepackage{natbib}
\setcitestyle{authoryear,round,citesep={;},aysep={,},yysep={;}}
 
\renewcommand{\cite}[1]{\citep{#1}}
\usepackage{url}
\usepackage{fullpage}
\usepackage{authblk}
}
\usepackage{microtype}
\usepackage[scaled=0.90]{helvet}

\usepackage{graphicx}
\usepackage{booktabs}
\usepackage{caption}
\usepackage{subcaption}
\usepackage{array}
\usepackage{amsthm,amsmath,amssymb,bbm,graphicx,rotating,verbatim}
\usepackage{boxedminipage}
\usepackage{enumerate}
\usepackage{epic,eepic}
\usepackage{tikz}
\usetikzlibrary{shapes}
\usepackage[rightcaption]{sidecap}
\usepackage{booktabs}
\usepackage{graphicx}
\usepackage{paralist}
\usepackage{floatrow}
\usepackage[author=,draft]{fixme}
\fxsetup{theme=color}

\newcommand{\nb}[1]{{\color{red} \scriptsize [ #1 ]}}
\newcommand{\ie}{\emph{i.e.}}
\newcommand{\eg}{\emph{e.g.}}
\newcommand{\etal}{\emph{et al.}}
\newcommand{\FPT}{\textbf{\textup{FPT}}}
\newcommand{\XP}{\textbf{\textup{XP}}}
\newcommand{\RFPT}{\textbf{\textup{FPT$_R$}}}
\newcommand{\RXP}{\textbf{\textup{XP$_R$}}}

\newcommand{\Weft}{{\textbf{\textup{W}}}}
\newcommand{\W}[1]{{\Weft}{{[#1]}}}
\newcommand{\paraNP}{\textbf{\textup{paraNP}}}
\newcommand{\NP}{\textbf{\textup{NP}}}
\newcommand{\hy}{\hbox{-}\nobreak\hskip0pt}

\newcommand{\coloring}{{\sc ($n/3$)-Coloring$^{\delta = n-4}$}}

\newcommand{\bigO}[1]{\ensuremath{{\mathcal O}(#1)}}

\newcommand{\SB}{\{\,}%
\newcommand{\SM}{\;{|}\;}%
\newcommand{\SE}{\,\}}%


\newcommand{\uDRM}{\textsc{DRMC}}
\newcommand{\bDRM}{\textsc{$p$-DRMC}}
\newcommand{\uRM}{\textsc{RMC}}
\newcommand{\bRM}{\textsc{$p$-RMC}}

\newcommand{\matM}{\mathbf{M}}
\newcommand{\rk}{\textsf{rk}}
\newcommand{\ur}{\textsf{dr}}
\newcommand{\GF}{\text{GF}}
\newcommand{\EQ}{\text{EQ}}
\newcommand{\col}{\textsf{col}}
\newcommand{\row}{\textsf{row}}
\newcommand{\comb}{\textsf{comb}}

\newcommand{\MEnt}{\mathbf{E}}

\newcommand{\MSO}{MSO$_2$}
\newcommand{\textfor}[1]{\mathit{#1}}
\newcommand{\tw}{\mathsf{tw}}

\newcommand{\PPP}{\ensuremath{{\mathcal P}}}
\newcommand{\III}{\ensuremath{{\mathcal I}}}
\newcommand{\RRR}{\ensuremath{{\mathcal R}}}


\theoremstyle{plain}
\newtheorem{THE}{Theorem}
\newtheorem{OBS}{Observation}
\newtheorem{PRO}[THE]{Proposition}
\newtheorem{LEM}[THE]{Lemma}
\newtheorem{COR}[THE]{Corollary}
\newtheorem{CLM}{Claim}

\newcommand{\pbDef}[3]{%
\noindent
\begin{center}
\begin{boxedminipage}{1 \columnwidth}
#1\\[5pt]
\begin{tabular}{l p{0.73 \columnwidth}}
Input: & #2\\
Task: & #3
\end{tabular}
\end{boxedminipage}
\end{center}
}

\icml{\usepackage{hyperref}}

\icml{\usepackage{icml2018}}

\begin{document}

\arxiv{
\title{{Parameterized Algorithms for the Matrix Completion Problem}}
\author[1]{Robert Ganian}
\author[2]{Iyad Kanj}
\author[1]{Sebastian Ordyniak}
\author[1]{Stefan Szeider}

\affil[1]{Algorithms and Complexity Group, TU Wien, Austria}
\affil[2]{School of Computing, DePaul University, Chicago, USA}
\renewcommand\Authands{ and }

\date{}

\maketitle
}

\icml{
\twocolumn[
\icmltitle{Parameterized Algorithms for the Matrix Completion Problem}

\icmlsetsymbol{equal}{*}

\begin{icmlauthorlist}
\icmlauthor{Robert Ganian}{tuw}
\icmlauthor{Iyad Kanj}{chic}
\icmlauthor{Sebastian Ordyniak}{tuw}
\icmlauthor{Stefan Szeider}{tuw}
\end{icmlauthorlist}

\icmlaffiliation{tuw}{Algorithms and Complexity Group, TU Wien, Austria}
\icmlaffiliation{chic}{School of Computing, DePaul University, Chicago, USA}

\icmlcorrespondingauthor{Robert Ganian}{rganian@gmail.com}
\icmlcorrespondingauthor{Iyad Kanj}{ikanj@cdm.depaul.edu}
\icmlcorrespondingauthor{Sebastian Ordyniak}{sordyniak@gmail.com}
\icmlcorrespondingauthor{Stefan Szeider}{stefan@szeider.net}

\icmlkeywords{matrix completion, parameterized complexity, exact algorithms, lower bounds}

\vskip 0.3in
]

\printAffiliationsAndNotice{\icmlEqualContribution}
}


\begin{abstract}
We consider two matrix completion problems, in which we are given a matrix with missing entries and the task is to complete the matrix in a way that (1) minimizes the rank, or (2) minimizes the number of distinct rows.
We study the parameterized complexity of the two aforementioned problems with respect to several parameters of interest, including the minimum number of matrix rows, columns, and rows plus columns needed to cover all missing entries. We obtain new algorithmic results showing that, for the bounded domain case, both problems are fixed-parameter tractable with respect to all aforementioned parameters. We complement these results with a lower-bound result for the unbounded domain case that rules out fixed-parameter tractability w.r.t.\ some of the parameters under consideration.
\end{abstract}

\section{Introduction}
\label{sec:intro}

\textbf{Problem Definition and Motivation}. We consider the matrix completion problem, in which we are given a matrix $\matM$ (over some field that we also refer to as the {\em domain} of the matrix) with missing entries, and the goal is to complete the entries of $\matM$ so that to optimize a certain measure. There is a wealth of research on this fundamental problem~\cite{cp10,cr09,ct10,ev13,hmrw14,fazel01,kmo10,kmo101,recht11,sfh16} due to its ubiquitous applications in recommender systems, machine learning, sensing, computer vision, data science, and predictive analytics, among others. In these areas, the matrix completion problem naturally arises after observing a sample from the set of entries of a low-rank matrix, and attempting to recover the missing entries with the goal of optimizing a certain measure. In this paper, we focus our study on matrix completion with respect to two measures (considered separately): (1) minimizing the rank of the completed matrix, and (2) minimizing the number of distinct rows of the completed matrix.

The first problem we consider---matrix completion w.r.t.~rank minimization---has been extensively studied, and is often referred to as the low-rank matrix completion problem~\cite{cp10,cr09,ct10,hmrw14,fazel01,kmo10,kmo101,recht11,sfh16}.
A celebrated application of this problem lies in the recommender systems area, where it is known as the Netflix problem~\cite{netflix}. In this user-profiling application, an entry of the input matrix represents the rating of a movie by a user, where some entries could be missing.
The goal is to predict the missing entries so that the rank of the complete matrix is minimized.

The low-rank matrix completion problem is known to be \NP-hard, even when the matrix is over the field~$\GF(2)$ (\ie, each entry is 0 or 1), and the goal is to complete the matrix into one of rank 3~\cite{peeters96}.
A significant body of work on the low-rank matrix completion problem has centered around proving that, under some feasibility assumptions, the matrix completion problem can be solved efficiently with high probability~\cite{cr09,recht11}. These feasibility assumptions are: (1) low rank; (2) incoherence; and (3) randomness~\cite{hmrw14}.
Hardt \etal~(\citeyear{hmrw14}) argue that feasibility assumption (3), which
states that the subset of determined entries in the matrix is selected
uniformly at random and has a large (sampling) density, is very
demanding. In particular, they justify that in many applications, such as the
Netflix problem, it is not possible to arbitrarily choose
which matrix entries are determined and which are not, as those may be dictated by outside factors.
The low-rank matrix completion problem also has other applications in the area of wireless sensor networks. In one such application, the goal is to reconstruct a low-dimensional geometry describing the locations of the sensors based on local distances sensed by each sensor; this problem is referred to as \textsc{triangulation from incomplete data}~\cite{cr09}. Due to its inherent hardness, the low-rank matrix completion problem has also been studied with respect to various notions of approximation~\cite{cr09,ct10,fkv04,hmrw14,kmo10,kmo101,recht11}.

The second problem we consider is the matrix completion problem w.r.t.~minimizing the number of distinct rows. Although this problem has not received as much attention as low-rank-matrix completion, it certainly warrants studying. In fact, minimizing the number of distinct rows represents a special case of the \textsc{sparse subspace clustering} problem~\cite{ev13}, where the goal is to complete a matrix in such a way that its rows can be partitioned into the minimum number of subspaces. The problem we consider corresponds to the special case of \textsc{sparse subspace clustering} where the matrix is over $\GF(2)$ and the desired rank of each subspace is 1.
Furthermore, one can see the relevance of this problem to the area of recommender systems; in this context, one seeks to complete the matrix in such a way that the profile of each user is identical to a member of a known (possibly small) group of users.

In this paper, we study the two aforementioned problems through the lens of \emph{parameterized complexity}~\cite{DowneyFellows13}. In this paradigm, one measures the complexity of problems not only in terms of their input size $n$ but also by a certain \emph{parameter} $k\in \mathbb{N}$, and seeks---among other things---fixed-parameter algorithms, i.e., algorithms that run in time $f(k)\cdot n^{\bigO{1}}$ for some function $f$.
Problems admitting such algorithms are said to be
\emph{fixed-parameter tractable} (or contained in the parameterized
complexity class \FPT{}). The motivation is that the parameter of choice---usually describing some structural properties of the instance---can be small in some instances of interest, even when the input size is large. Therefore, by confining the combinatorial explosion to this parameter, one can obtain efficient algorithms for problem instances with a small parameter value for \NP-hard problems.
Problems that are not (or unlikely to be) fixed-parameter tractable
can still be solvable in polynomial-time for every fixed parameter
value, i.e., they can be solved in time $n^{f(k)}$ for some function
$f$. Problems of this kind are contained in the
parameterized complexity class \XP{}.
We also consider randomized versions of \FPT{} and \XP{}, denoted by \RFPT{} and \RXP{}, containing
all problems that can be solved by a randomized algorithm with a
run-time of $f(k)n^{\bigO{1}}$ and $\bigO{n^{f(k)}}$, respectively, with a
constant one-sided error-probability.
Finally, problems that remain
\NP\hy hard for some fixed value of the parameter are hard for the
parameterized complexity class \paraNP{}.
We refer to the respective textbooks for a detailed introduction to parameterized complexity~\cite{DowneyFellows13,CyganFKLMPPS15}.
Parameterized Complexity is a rapidly growing field with various
applications in many areas of Computer Science, including Artificial
Intelligence~\cite{GottlobPichlerWei10,BevernKNSW16,GanianOrdyniak18,BessiereHHKQW08,BonnetGLRS17}.

\textbf{Parameterizations}. The parameters that we consider in this
paper are: The number of (matrix) rows that cover all missing entries
(\row{}); the number of columns that cover all missing entries
(\col{}); and the minimum number of rows and columns which together
cover all missing entries (\comb{}). Although we do discuss and
provide results for the unbounded domain case, i.e, the case that the
domain (field size) is part of the input, we focus on the case when
the matrix is over a bounded domain: This case is the most relevant
from a practical perspective, and most of the related works focus on
this case. It is easy to see that, when stated over any bounded domain, both
problems under consideration are in \FPT{} when parameterized by the
number of missing entries, since an algorithm can brute-force through
all possible solutions.
%
On the other hand, parameterizing by \row{}
(resp.~\col{}) is very interesting from a practical perspective, as
rows (resp.~columns) with missing entries represent the newly-added
elements (\eg, newly-added users/movies/sensors, etc.); here, the
above brute-force approach naturally fails, since the number of
missing entries is no longer bounded by the parameter alone.
Finally, the parameterization by \comb{} is interesting because this
parameter subsumes (\ie, is smaller than) the other two parameters
(\ie, \row{} and \col{}). In particular, any fixed-parameter algorithm
w.r.t.\ this parameter implies a fixed-parameter algorithm w.r.t.\ the
other two parameters, but can also remain efficient in cases where the
number of rows and/or columns with missing entries is large.

\textbf{Results and Techniques}. We start in Section~\ref{sec:rm} by
considering the {\sc Bounded Rank Matrix Completion} problem over
$\GF(p)$ (denoted \textsc{$p$-RMC}), in which the goal is to complete
the missing entries in the input matrix so that the rank of the
completed matrix is at most $t$, where $t \in \mathbb{N}$ is given as
input. We present a (randomized) fixed-parameter algorithm for this problem
parameterized by \comb{}. This result is obtained by applying a
branch-and-bound algorithm combined with algebra techniques, allowing
us to reduce the problem to a system of quadratic equations in which only few
(bounded by some function of the parameter)
equations contain non-linear terms. We then use a result by
Miura \etal~(\citeyear{MiuraHT14}) (improving an earlier result by
Courtois \etal~(\citeyear{CourtoisGMT02}))
in combination with reduction techniques to
show that solving such a system of
equations is in \RFPT{} parameterized by the number of equations
containing non-linear terms.
In the case where the domain is unbounded, 
we show that \textsc{RMC} is in \XP{} parameterized by either \row{} or \col{} and in \RXP{}
parameterized by \comb{}.
%

In Section~\ref{sec:fr}, we turn our attention to the {\sc Bounded Distinct Row Matrix Completion} problem over both bounded domain ({\sc $p$-DRMC}) and unbounded domain ({\sc DRMC}); here, the goal is to complete the input matrix so that the number of distinct rows in the completed matrix is at most $t$. We start by showing that {\sc $p$-DRMC} parameterized by \comb{} is fixed-parameter tractable.
We obtain this result as a special case of a more general result showing that both {\sc DRMC} and {\sc $p$-DRMC} are fixed-parameter tractable parameterized by the \emph{treewidth}~\cite{RobertsonSeymour86,DowneyFellows13} of the compatibility graph, i.e., the graph having one vertex for every row and an edge between two vertices if the associated rows can be made identical. This result also allows us to show that {\sc DRMC} is fixed-parameter tractable parameterized by \row{}. Surprisingly, {\sc DRMC} behaves very differently when parameterized by \col{}, as we show that, for this parameterization, the problem becomes \paraNP{}\hy hard.

\begin{table}[ht]
\vspace{-0.1cm}
\begin{tabular}{@{}l@{\quad}l@{~~~}l@{~~~}l@{}} \toprule
    & \row & \col & \comb \\\midrule
\bRM{} & \FPT{}${}^\text{(Th.~\ref{the:mcrbrow})}$ &
       \FPT{}${}^\text{(Cor.~\ref{cor:mcrbcol})}$ &
       \RFPT{}${}^\text{(Th.~\ref{thm:rbcomb})}$ \\
\bDRM{} & \FPT{}${}^\text{(Th.~\ref{thm:bounded-drm-fpt})}$ &
       \FPT{}${}^\text{(Th.~\ref{thm:bounded-drm-fpt})}$ &
       \FPT{}${}^\text{(Th.~\ref{thm:bounded-drm-fpt})}$ \\ \midrule
\uRM{} & \XP{}${}^\text{(Cor.~\ref{cor:XP})}$ &
       \XP{}${}^\text{(Cor.~\ref{cor:XP})}$ &
       \RXP{}${}^\text{(Cor.~\ref{cor:combxp})}$ \\
\uDRM{} & \FPT{}${}^\text{(Th.~\ref{thm:unbounded-drm-fpt})}$ &
       \paraNP{}${}^\text{(Th.~\ref{thm:unbounded-drm-np})}$ &
       \paraNP{}${}^\text{(Th.~\ref{thm:unbounded-drm-np})}$\\ \bottomrule
 \end{tabular}
  \vspace{-0.1cm}
  \caption[]{The parameterized complexity results obtained for the problems \bRM{} and \bDRM{} and their unbounded domain variants \uRM{} and \uDRM{} w.r.t.\ the parameters \row{}, \col{}, \comb{}. 
  \vspace{-0.2cm}}
  \label{tab:1}
\end{table}

We chart our results in Table~\ref{tab:1}. Interestingly, in the unbounded domain case, both considered problems exhibit wildly different behaviors: While \textsc{RMC} admits \XP\ algorithms regardless of whether we parameterize by \row{} or \col{}, using these two parameterizations for \textsc{DRMC} results in the problem being \FPT{} and \paraNP\hy hard, respectively. On the other hand, in the (more studied) bounded domain case, we show that both problems are in \FPT{} (resp.~\RFPT{}) w.r.t.~all parameters under consideration. Finally, we prove that {\sc $2$-DRMC} remains \NP-hard even if every column and row contains (1) a bounded number of missing entries, or (2) a bounded number of determined entries. This effectively rules out \FPT{} algorithms w.r.t.~the parameters: maximum number of missing/determined entries per row or column.



\section{Preliminaries}
\label{sec:prel}
For a prime number $p$, let $\GF(p)$ be a field of order $p$; recall that each such field can be equivalently represented as the set of integers modulo $p$. For positive integers $i$ and $j>i$, we write $[i]$ for the set $\{1,2,\dots,i\}$, and $i:j$ for the set $\{i,i+1,\dots,j\}$.

For an $m\times n$ matrix $\matM$ (\ie, a matrix with $m$ rows and $n$ columns), and for $i\in [m]$ and $j\in [n]$, $\matM[i,j]$ denotes the element in the $i$-th row and $j$-th column of $\matM$. Similarly, for a vector $d$, we write $d[i]$ for the $i$-th coordinate of $d$.
We write $\matM[*,j]$ for the \emph{column-vector} $(\matM[1,j],\matM[2,j],\dots,\matM[m,j])$, and $\matM[i,*]$ for the \emph{row-vector} $(\matM[i,1],\matM[i,2],\dots,\matM[i,n])$. We will also need to refer to submatrices obtained by omitting certain rows or columns from $\matM$. We do so by using sets of indices to specify which rows and columns the matrix contains. For instance, the matrix $\matM[[i],*]$ is the matrix consisting of the first $i$ rows and all columns of $\matM$, and $\matM[2:m,1:n-1]$ is the matrix obtained by omitting the first row and the last column from $\matM$.

The \emph{row-rank} (resp.~\emph{column-rank}) of a matrix $\matM$ is the maximum number of linearly-independent rows (resp.~\emph{columns}) in $\matM$. It is well known that the row-rank of a matrix is equal to its column-rank, and this number is referred to as the \emph{rank} of the matrix.  We let $\rk(\matM)$ and $\ur(\matM)$ denote the rank and the number of distinct rows of a matrix $\textbf{M}$, respectively. If $\matM$ is a matrix over $\GF(p)$, we call $\GF(p)$ the \emph{domain} of $\matM$.

An \emph{incomplete matrix} over $\GF(p)$ is a matrix which may contain not only elements from $\GF(p)$ but also the special symbol $\bullet$. An entry is a \emph{missing} entry if it contains $\bullet$, and is a \emph{determined} entry otherwise. A (possibly incomplete) $m\times n$ matrix $\matM'$ is \emph{consistent} with an $m\times n$ matrix $\matM$ if and only if, for each $i\in [m]$ and $j\in [n]$, either $\matM'[i,j]=\matM[i,j]$ or $\matM'[i,j]=\bullet$.

\subsection{Problem Formulation}
\label{sub:problems}


We formally define the problems under consideration below.

\pbDef{\small\textsc{Bounded Rank Matrix Completion} ($p$-\textsc{RMC})}{An incomplete matrix $\matM$ over $\GF(p)$ for a fixed prime number $p$, and an integer $t$.}{Find a matrix $\matM'$ consistent with $\matM$ such that $\rk(\matM')\leq t$.}

\pbDef{\small\textsc{Bounded Distinct Row Matrix Completion} ($p$-\textsc{DRMC})}{An incomplete matrix $\matM$ over $\GF(p)$ for a fixed prime number $p$, and an integer $t$.}{Find a matrix $\matM'$ consistent with $\matM$ such that $\ur(\matM')\leq t$.}

Aside from the problem variants where $p$ is a fixed prime number, we also study the case where matrix entries range over a domain that is provided as part of the input. In particular, the problems \textsc{RMC} and \textsc{DRMC} are defined analogously to \textsc{$p$-RMC} and \textsc{$p$-DRMC}, respectively, with the sole distinction that the prime number $p$ is provided as part of the input.
We note that \textsc{$2$-RMC} is \NP-hard even for $t=3$~\cite{peeters96}, and the same holds for \textsc{$2$-DRMC} (see Theorem~\ref{thm:drmchard}). Without loss of generality, we assume that the rows of the input matrix are pairwise distinct.

\lv{
\subsection{Parameterized Complexity}

The class \FPT{} consists of all parameterized problems solvable in time $f(k)\cdot n^{\bigO{1}}$, where $k$ is the parameter and $n$ is the input size; we refer to such running time as \emph{\FPT{}-time}.
The class \XP\ contains all parameterized problems solvable in time $\bigO{n^{f(k)}}$.
The following relations hold among the parameterized complexity classes: \mbox{$\FPT \subseteq \W{1}\subseteq \XP$}.

The class \paraNP\ is defined as the class of parameterized problems that are solvable by a non-deterministic Turing machine in \FPT{}-time.
In the \paraNP-hardness proofs, we will make use of the following characterization of \paraNP-hardness~\cite{FlumGrohe06}: Any parameterized problem that remains \NP-hard when the
parameter is a constant is \paraNP-hard. For problems in \NP, it holds that $\XP \subseteq \paraNP$; in particular showing the \paraNP-hardness of a problem rules out the existence of algorithms running in time  $\bigO{n^{f(k)}}$ for the problem.
We also consider randomized versions of \FPT{} and \XP{}, denoted by \RFPT{} and \RXP{}, containing
all problems that can be solved by a randomized algorithm with a
run-time of $f(k)n^{\bigO{1}}$ and $\bigO{n^{f(k)}}$, respectively, with a
constant one-sided error-probability.
}

\subsection{Treewidth}
\label{sub:tw}
\sv{Treewidth~\cite{RobertsonSeymour86} is one of the most prominent
decompositional parameters for graphs and has found numerous
applications in computer science.
A \emph{tree-decomposition}~$\mathcal{T}$ of a graph $G=(V,E)$ is a pair
$(T,\chi)$, where $T$ is a tree and $\chi$ is a function that assigns each
tree node $t$ a set $\chi(t) \subseteq V$ of vertices such that the following
conditions hold: (TD1) for every edge $uv \in E(G)$ there is a tree
node $t$ such that $u,v\in \chi(t)$; and (TD2) for every vertex $v
\in V(G)$, the set of tree nodes $t$ with $v\in \chi(t)$ forms a
non-empty subtree of~$T$. The \emph{width} of a tree-decomposition
$(T,\chi)$ is the size of a largest bag minus~$1$.  A tree-decomposition of
minimum width is called \emph{optimal}. The \emph{treewidth} of a graph $G$,
denoted by $\tw(G)$, is the width of an optimal tree decomposition
of~$G$. We will assume that
the tree $T$ of a tree-decomposition is rooted and we will denote by $T_t$ the subtree of $T$ rooted
at $t$ and write $\chi(T_t)$ for the set $\bigcup_{t'\in
  V(T_t)}\chi(t')$.
}

\lv{Treewidth~\cite{RobertsonSeymour86} is one of the most prominent
decompositional parameters for graphs and has found numerous
applications in computer science.
A \emph{tree-decomposition}~$\mathcal{T}$ of a graph $G=(V,E)$ is a pair
$(T,\chi)$, where $T$ is a tree and $\chi$ is a function that assigns each
tree node $t$ a set $\chi(t) \subseteq V$ of vertices such that the following
conditions hold:
\begin{itemize}
\item[(TD1)] For every edge $uv \in E(G)$ there is a tree node
  $t$ such that $u,v\in \chi(t)$.
\item[(TD2)] For every vertex $v \in V(G)$,
  the set of tree nodes $t$ with $v\in \chi(t)$ forms a non-empty subtree of~$T$.
\end{itemize}
The sets $\chi(t)$ are called \emph{bags} of the decomposition~$\mathcal{T}$ and $\chi(t)$
is the bag associated with the tree node~$t$. The \emph{width} of a tree-decomposition
$(T,\chi)$ is the size of a largest bag minus~$1$.  A tree-decomposition of
minimum width is called \emph{optimal}. The \emph{treewidth} of a graph $G$,
denoted by $\tw(G)$, is the
width of an optimal tree decomposition of~$G$.
For the presentation of our dynamic programming algorithms, it is convenient to consider tree
decompositions in the following normal form~\cite{Kloks94}: A
tuple $(T,\chi)$ is a \emph{nice tree decomposition} of a graph $G$ if
$(T,\chi)$ is a tree decomposition of~$G$, the tree $T$ is rooted at
node $r$,
and each node of $T$ is of one of the following four types:
\begin{enumerate}
\item a \emph{leaf node}: a node $t$ having no children and $|\chi(t)|=1$;
\item a \emph{join node}: a node $t$ having exactly two children
  $t_1,t_2$, and $\chi(t)=\chi(t_1)=\chi(t_2)$;
\item an \emph{introduce node}: a node $t$ having exactly one child
  $t'$, and $\chi(t)=\chi(t')\cup \{v\}$ for a node $v$ of $G$;
\item a \emph{forget node}: a node $t$ having exactly one child $t'$,
  and $\chi(t)=\chi(t')\setminus \{v\}$ for a node $v$ of $G$.
\end{enumerate}

For convenience we will also assume that $\chi(r)=\emptyset$ for the root $r$
of~$T$. For $t \in V(T)$ we denote by $T_t$ the subtree of $T$ rooted
at $t$ and we write $\chi(T_t)$ for the set $\bigcup_{t'\in V(T_t)}\chi(t')$.

\begin{PRO}[\cite{Kloks94,Bodlaender96,BodlaenderDDFLP16}]\label{pro:comp-tw}
  It is possible to compute an optimal (nice) tree-decomposition of an $n$-vertex graph $G$ with treewidth $k$ in time $k^{\bigO{k^3}}n$, and to compute a $5$-approximate one in time $2^{\bigO{k}}n$. Moreover, the number of nodes in the obtained tree
  decompositions is $\bigO{kn}$.
\end{PRO}
}

\subsection{Problem Parameterizations}
One advantage of the parameterized complexity paradigm is that it allows us to study the complexity of a problem w.r.t.~several parameterizations of interest/relevance. To provide a concise description of the parameters under consideration, we introduce the following terminology: We say that a $\bullet$ entry  at position $[i,j]$ in an incomplete matrix $\matM$ is \emph{covered} by row $i$ and by column $j$. In this paper, we study \textsc{RMC} and \textsc{DRMC} w.r.t.~the following parameterizations (see Figure~\ref{fig:parameterizations} for illustration):

\begin{itemize}
\vspace{-0.3cm}
\item $\col$: The minimum number of columns in the matrix $\matM$ covering all occurrences of $\bullet$ in $\matM$.
\vspace{-0.2cm}
\item $\row$: The minimum number of rows in the matrix $\matM$ covering all occurrences of $\bullet$ in $\matM$.
\vspace{-0.2cm}
\item $\comb$: The minimum value of $r+c$ such that there exist $r$
  rows and $c$ columns in $\matM$ with the property that each
  occurrence of $\bullet$ is covered one of these rows or columns.
\end{itemize}

\begin{figure}[tbhp]
\vspace{-0.1cm}
\begin{equation*}
\vspace{-0.1cm}
     \left(\begin{array}{cccccc}
     1 & 1 & 1 & 0 & \bullet & 1 \\
     0 & 0 & 1 & 0 & \bullet & 1 \\
     0 & \bullet & \bullet  & 0 & \bullet & \bullet \\
     1 & 1 & 0 & 1 & 0 & 1 \\
        \end{array}\right)
\end{equation*}
\caption{Illustration of the parameters $\col$, $\row$, and $\comb$ in an incomplete matrix. Here $\col=4$, $\row=3$, and $\comb=2$. \vspace{-0.1cm}}
\label{fig:parameterizations}
\end{figure}

We denote the parameter under consideration in brackets after the problem name (e.g., \textsc{DRMC}[\comb]).
As mentioned in Section~\ref{sec:intro}, both $p$-\textsc{RMC} and
$p$-\textsc{DRMC} are trivially in \FPT{} when parameterized by the number of missing entries, and hence this parameterization is not discussed further.

\lv{Given an incomplete matrix $\matM$, computing the parameter values for $\col$ and $\row$ is trivial. Furthermore, the parameter values satisfy $\comb\leq \row$ and $\comb \leq \col$. We show that the parameter value for $\comb$ can also be computed in polynomial time.
}
\sv{Given an incomplete matrix $\matM$, computing the parameter values for $\col$ and $\row$ is trivial. Furthermore, the parameter values satisfy $\comb\leq \row$ and $\comb \leq \col$. The parameter value for $\comb$ can also be computed in polynomial time by reducing the problem to finding a vertex cover in a bipartite graph:
}

\begin{PRO}
\label{pro:compcomb}
Given an incomplete matrix $\matM$ over $\GF(p)$, we can compute the parameter value for {\normalfont comb}, along with sets $R$ and $C$ of total cardinality {\normalfont comb} containing the indices of covering rows and columns, respectively, in time $\bigO{(n\cdot m)^{1.5}}$.
\end{PRO}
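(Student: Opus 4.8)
The plan is to reduce the computation of \comb{} to finding a minimum vertex cover in an auxiliary bipartite graph, which—by König's theorem—can be obtained from a maximum bipartite matching. First I would build the bipartite graph $H=(A\cup B,E)$ associated with $\matM$: introduce one vertex $a_i\in A$ for each row $i\in[m]$ and one vertex $b_j\in B$ for each column $j\in[n]$, and add an edge $a_ib_j\in E$ precisely when $\matM[i,j]=\bullet$. Thus $H$ has $m+n$ vertices and exactly one edge per missing entry.

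The key observation is the exact correspondence between vertex covers of $H$ and pairs of covering row/column sets. A set $S\subseteq A\cup B$ is a vertex cover of $H$ if and only if, for every missing entry at position $[i,j]$, at least one of $a_i,b_j$ lies in $S$; setting $R=\SB i\SM a_i\in S\SE$ and $C=\SB j\SM b_j\in S\SE$ then yields row and column sets that together cover every occurrence of $\bullet$, with $\Card{R}+\Card{C}=\Card{S}$. Conversely, any covering pair $(R,C)$ gives a vertex cover of the same size. Hence \comb{} equals the size of a minimum vertex cover of $H$, and optimal sets $R,C$ can be read directly off an optimal vertex cover.

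Then I would invoke König's theorem: in a bipartite graph the minimum vertex cover size equals the maximum matching size, and—more importantly for the algorithmic claim—a minimum vertex cover can be constructed from a maximum matching $M$ in time linear in the size of $H$, via the standard alternating-path reachability argument (mark all vertices reachable from unmatched $A$-vertices along $M$-alternating paths, and take the unmarked $A$-vertices together with the marked $B$-vertices). Concretely, I would compute a maximum matching in $H$ using the Hopcroft–Karp algorithm, which runs in time $\bigO{\Card{E}\sqrt{\Card{A\cup B}}}$, and then apply the above construction to recover $S$ and hence $R,C$.

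For the running time, note $\Card{E}\leq m\cdot n$ and $\Card{A\cup B}=m+n\leq m\cdot n$ (the degenerate cases $m\leq 1$ or $n\leq 1$ are trivial), so Hopcroft–Karp runs in time $\bigO{m n\sqrt{m+n}}\subseteq \bigO{(n\cdot m)^{1.5}}$, while constructing $H$ and reading off $R,C$ costs only $\bigO{mn}$; the overall bound is therefore $\bigO{(n\cdot m)^{1.5}}$. There is no serious obstacle in this argument: the only points requiring care are stating the vertex-cover equivalence in both directions and confirming that the matching-to-cover construction is indeed linear-time, which together certify the claimed running time.
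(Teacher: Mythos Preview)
Your proposal is correct and essentially identical to the paper's own proof: both construct the bipartite row/column incidence graph of the missing entries, observe the bijection between vertex covers and covering row/column sets, and then invoke K\"onig's theorem together with Hopcroft--Karp to obtain the $\bigO{(n\cdot m)^{1.5}}$ bound. The only cosmetic difference is that the paper only creates vertices for rows and columns that actually contain a $\bullet$, whereas you create one per row and column; this does not affect the argument or the running time.
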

\lv{
\begin{proof}
We begin by constructing an auxiliary bipartite graph $G$ from $\matM$ as follows. For each row $i$ containing a $\bullet$, we create a vertex $v_i$ in $G$; similarly, for each column $j$ containing a $\bullet$, we create a vertex $w_j$. For each $\bullet$ that occurs at position $[i,j]$, we add an edge between $v_i$ and $w_j$.

We observe that if $R$ contains row indices and $C$ contains column indices which together cover all occurrences of $\bullet$, then $X=\{v_i \mid i\in R\}\cup \{w_j \mid j\in C\}$ is a vertex cover of $G$. Similarly, for each vertex cover $X$ of $G$, we can obtain a set $R=\{i \mid v_i\in X\}$ and a set $C=\{j \mid w_j\in X\}$ such that $R$ and $C$ cover all occurrences of $\bullet$ in $\matM$. Hence there is a one-to-one correspondence between vertex covers in $G$ and sets $R$ and $C$ which cover all $\bullet$ symbols in $\matM$, and in particular, the size of a minimum vertex cover in $G$ is equal to {\normalfont comb}. The lemma now follows by K\"onig's theorem and  Hopcroft-Karp's algorithm, which allow us to compute a minimum vertex cover in a bipartite graph $G$ in time $\bigO{|E(G)|\cdot \sqrt{|V(G)|}}$.
\end{proof}
}

\section{Rank Minimization}
\label{sec:rm}

In this section we present our results for \textsc{Bounded Rank Matrix Completion} under various parameterizations.

\subsection{Bounded Domain: Parameterization by \normalfont{row}}
As our first result, we present an algorithm for solving \textsc{$p$-RMC}[row]. This will serve as a gentle introduction to the techniques used in the more complex result for \textsc{$p$-RMC}[comb], and will also be used to give an \XP\ algorithm for \textsc{RMC}[row].

\begin{THE}
\label{the:mcrbrow}
\textsc{$p$-RMC}\emph{[\row{}]} is in \FPT{}.
\end{THE}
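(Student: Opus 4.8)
We want to show $\textsc{$p$-RMC}[\row]$ is in \FPT. The parameter $\row = k$ means all missing entries lie in at most $k$ rows. Let me think about what this structure gives us.

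Let me denote the rows with missing entries as the "dirty" rows, and the fully-determined rows as "clean" rows. We have at most $k$ dirty rows.

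The goal is to complete the matrix to have rank at most $t$.\textbf{Setup and overall strategy.} Let $R$ be a set of at most $k=\row$ rows covering all missing entries, so every row of $\matM$ outside $R$ is fully determined. Call the rows in $R$ the \emph{incomplete} rows and the remaining rows the \emph{determined} rows. The plan is to reduce the search for a low-rank completion to two decisions: first, guessing a basis of the row space coming from determined rows, and second, expressing each incomplete row as a linear combination of that basis. Concretely, let $D$ denote the submatrix of determined rows. A completion of $\matM$ has rank $\le t$ iff its entire row space has rank $\le t$; I would first compute $\rk(D)=:r_0$. If $r_0>t$ the instance is trivially a no-instance, so assume $r_0\le t$.

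\textbf{Guessing the target row space.} The key observation is that in any rank-$\le t$ completion $\matM'$, the row space of $\matM'$ is spanned by at most $t$ rows. I would branch over which rows form such a spanning set. Since the determined rows already contribute rank $r_0$, it suffices to pick a maximal independent set $B_0$ of determined rows (of size $r_0$) and then decide how the incomplete rows extend the row space: at most $t-r_0 \le t$ of the incomplete rows may be linearly independent from $\mathrm{span}(B_0)$ and from each other. Since there are at most $k$ incomplete rows, I would branch over all $\binom{k}{\le t}$ choices of which incomplete rows serve as new basis vectors (this is where boundedness in $k$ is used), and over all $p^{\,(t-r_0)\cdot n}$ \ldots no---that is too expensive. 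Instead, the better route is: fix the basis to be $B_0$ together with the chosen basis incomplete rows, and then treat the \emph{entries of the basis incomplete rows} as unknowns over $\GF(p)$. Every other row of $\matM'$ (determined or incomplete) must then lie in the span of this basis, which translates into linear constraints.

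\textbf{Reduction to a solvable system.} For each determined non-basis row, lying in $\mathrm{span}(B_0)$ is automatically guaranteed (they are already consistent with rank $r_0$), so the real constraints are: each non-basis incomplete row equals a $\GF(p)$-linear combination of the basis rows, matching the determined entries in its known positions. Here I would introduce, for each non-basis incomplete row, a vector of at most $t$ coefficient unknowns, and note that the missing-entry positions are \emph{free} while the determined positions impose equations. The resulting system is a collection of bilinear equations (unknown entries of basis incomplete rows times unknown combination coefficients). The main obstacle is exactly this nonlinearity. However---and this is the crux---the number of genuinely nonlinear equations is bounded in terms of $k$ and $t$ only, because the bilinear products all involve the $O(k\cdot t)$ coefficient unknowns and the entries of the $\le t$ basis incomplete rows; in the determined columns the basis-row entries are \emph{constants}, so those equations are linear. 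I would then invoke the Miura--Torii--Hori-type result cited in the introduction to solve a quadratic system whose number of nonlinear-term-bearing equations is bounded by a function of the parameter, in \RFPT\ (or, since here we can enumerate the $\GF(p)$ values of the bounded set of basis-row entries in the missing columns directly, obtain a deterministic \FPT\ bound).

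\textbf{Where the difficulty concentrates.} I expect the main subtlety to be bounding the number of bilinear equations and unknowns strictly in terms of $k$ and $t$ (not $n$ or $m$): one must argue that, once the basis is fixed, the equations over the many determined columns are linear in the coefficient unknowns, and that only the $O(k\cdot t)$ missing-column positions of basis rows contribute nonlinear terms. Over a bounded domain $\GF(p)$ with $p$ fixed, one can even brute-force these $O(k\cdot t)$ critical entries, collapsing everything to a polynomial-size linear feasibility test per guess, which yields \FPT\ directly and avoids randomization. Summing over the $\binom{k}{\le t}$ branches and the $p^{O(kt)}$ guesses gives running time $f(k,t)\cdot n^{O(1)}$, establishing membership in \FPT.
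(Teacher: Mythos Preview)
Your overall strategy---branch on which incomplete rows become new basis vectors, then express the remaining incomplete rows as linear combinations---is the right shape and matches the paper's approach. However, the step where you bound the nonlinear part is incorrect.

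You claim that ``only the $O(k\cdot t)$ missing-column positions of basis rows contribute nonlinear terms'' and that one can brute-force these ``$O(k\cdot t)$ critical entries.'' In the \row\ parameterization there is no distinction between ``determined columns'' and ``missing columns'': each of the $\le k$ incomplete rows may have missing entries in \emph{every} column. Hence the number of missing entries in your chosen basis incomplete rows can be $\Theta(kn)$, not $O(kt)$, and the bilinear terms $c_i\cdot i[j]$ (coefficient times unknown basis entry) can appear in $\Theta(n)$ equations. Brute-forcing those entries is not \FPT. A second, related issue is that your stated running time $f(k,t)\cdot n^{O(1)}$ depends on $t$, which is \emph{not} part of the parameter here.

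The fix is to brute-force the \emph{other} factor in the bilinear products. For each non-basis incomplete row $d$ and each basis incomplete row $i\in I$, the coefficient $c_{d,i}$ is a single element of $\GF(p)$, and there are at most $|R\setminus I|\cdot|I|\le k^2$ such coefficients in total. Branching over all $p^{k^2}$ assignments to these coefficients turns every equation
\[
d[j]=\sum_{\ell>k} a^d_{\ell}\,\matM[\ell,j]\;+\;\sum_{i\in I} c_{d,i}\cdot i[j]
\]
into a \emph{linear} equation (the $a^d_\ell$'s multiply known constants, and the $c_{d,i}$'s are now constants multiplying the variables $i[j]$). One then solves a linear system of size $O(kn)$ by Gaussian elimination. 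This is precisely the paper's ``signature'' $S=(I,D)$: $I$ is your choice of basis incomplete rows, and $D$ records exactly the brute-forced coefficients $c_{d,i}$. The resulting running time $O(2^k p^{k^2}\cdot (m+kn)^3)$ depends only on $k$ and $p$, as required.
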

\lv{\begin{proof}}
\sv{\begin{proof}[Proof Sketch.]}
Let $R$ be the (minimum) set of rows that cover all occurrences of $\bullet$ in the input matrix $\matM$. Since the existence of a solution does
not change if we permute the rows of $\matM$, we permute the rows of $\matM$ so that the rows in $R$ have indices $1,\dots,k$.
We now proceed in three steps.

For the first step, we will define the notion of \emph{signature}: A signature $S$ is a tuple $(I,D)$, where $I\subseteq R$ and $D$ is a mapping from $R\setminus I$ to $(I\rightarrow \GF(p))$. Intuitively, a signature $S$ specifies a subset $I$ of $R$ which is expected to be independent in $M[k+1:m,*]\cup I$ (i.e., adding the rows in $I$ to $M[k+1:m,*]$ is expected to increase the rank of $M[k+1:m,*]$ by $|I|$);
and for each remaining row of $R$, $S$ specifies how that row should depend on $I$. The latter is carried out using $D$: For each row in $R\setminus I$, $D$ provides a set of coefficients expressing the dependency of that row on the rows in $I$. Formally, we say that a matrix $\matM'$ that is compatible with the incomplete matrix $\matM$ \emph{matches} a signature $(I,D)$ if and only if, for each row (\ie, vector) $d\in R\setminus I$, there exist coefficients $a^d_{k+1},\dots, a^d_{m}\in \GF(p)$ such that $d=a^d_{k+1}\matM[k+1,*]+\dots+a^d_{m}\matM[m,*]+\sum_{i\in I}D(d)(i)\cdot i$.
The first step of the algorithm branches through all possible signatures $S$. Clearly, the number of distinct signatures is upper-bounded by $2^k\cdot p^{k^2}$.

\lv{
For the second step, we fix an enumerated signature $S$. The algorithm will verify whether $S$ is \emph{valid}, \ie, whether there exists a matrix $\matM'$ compatible with $\matM$ that matches $S$. To do so, the algorithm will construct a system of $|R\setminus I|$ equations over vectors of size $n$, and then transform this into a system $\Upsilon_S$ of $|R\setminus I|\cdot n$ equations over $\GF(p)$ (one equation for each vector coordinate). For each $d\in R\setminus I$, $\Upsilon_S$ contains the following variables:
\begin{itemize}
\item one variable for each coefficient $a^d_{k+1},\dots, a^d_{m}$, and
\item one variable for each occurrence of $\bullet$ in the rows of $R$.
\end{itemize}
}
\sv{
For the second step, we fix an enumerated signature $S$. The algorithm will verify whether $S$ is \emph{valid}, \ie, whether there exists a matrix $\matM'$ compatible with $\matM$ that matches $S$. To do so, the algorithm will construct a system of $|R\setminus I|$ equations over vectors of size $n$, and then transform this into a system $\Upsilon_S$ of $|R\setminus I|\cdot n$ equations over $\GF(p)$ (one equation for each vector coordinate). For each $d\in R\setminus I$, $\Upsilon_S$  contains one variable for each coefficient $a^d_{k+1},\dots, a^d_{m}$ and one variable for each occurrence of $\bullet$ in the rows of $R$.}
For instance, the first equation in $\Upsilon_S$ has the following form:
$d[1]=a^d_{k+1}\matM[k+1,1]+\dots+a^d_{m}\matM[m,1]+\sum_{i\in I}D(d)(i)\cdot i[1]$, where $a^d_{k+1},\dots, a^d_{m}$ are variables, and $d[1]$ as well as each $i[1]$ in the sum could be a variable or a fixed number. Crucially, $\Upsilon_S$ is a system of at most $(k\cdot n)$ \emph{linear} equations over $\GF(p)$ with at most $m+kn$ variables, and can be solved in time $\bigO{(m+kn)^{3}}$ by Gaussian elimination.
Constructing the equations takes time $\bigO{m\cdot n}$.

During the second step, the algorithm determines whether a signature $S$ is valid or not, and in the end, after going through all signatures, selects an arbitrary valid signature $S=(I,D)$ with minimum $|I|$.
For the final third step, the algorithm checks whether $|I|+\rk(\matM[k+1:m,*])\leq t$. We note that computing $\rk(\matM[k+1:m,*])$ can be carried out in time $\bigO{nm^{1.4}}$~\cite{IbarraMH82}. If the above inequality does not hold, the algorithm rejects; otherwise it recomputes a solution to $\Upsilon_S$ and outputs the matrix $\matM'$ obtained from $\matM$ by replacing each occurrence of $\bullet$ at position $[i,j]$ by the value of the variable $i[j]$ in the solution to $\Upsilon_S$.
The total running time is $\bigO{(2^k\cdot p^{k^2})\cdot ((m+kn)^{3}+nm^{1.4})}=\bigO{2^kp^{k^2}\cdot (m+kn)^{3}}$.
\lv{

To argue the correctness of the algorithm, consider a matrix $\matM'$ that the algorithm outputs. Obviously, $\matM'$ is consistent with $\matM$. Furthermore, $\matM'$ has rank at most $t$; indeed, the rank of $\matM[k+1:m,*]$ is at most $t-|I|$, and every row $d\in R\setminus I$ can be obtained as a linear combination of rows in $\matM[k+1:m,*]$ (using coefficients $a^d_{k+1},\dots,a^d_{m}$) and $I$ (using coefficients $D(d)$).

Conversely, assume that there exists a matrix $\matM'$ that is consistent with $\matM$ and that has rank at most $t$. Choose $\matM'$ to be of minimum rank over all matrices consistent with $\matM$. Consider the signature $S$ obtained (``reverse-engineered'') from $\matM'$ as follows. First, we choose a row-basis $B$ of $\matM'$ such that $|B\cap R|$ is minimized, and we set $I=R\cap B$. Now, each row in $R\setminus I$ can be obtained as a linear combination of rows in $B$ and, in particular, as a linear combination of the rows in $\matM'[k+1:m,*]$ and $I$. This can be expressed as a system of equations $\Upsilon'$, where, for each row $d\in R\setminus I$, we write $d=a^d_{k+1}\matM[k+1,*]+\dots+a^d_{m}\matM'[m,*]+\sum_{i\in I}D(d)(i)\cdot i$ and our variables are: $a^d_{k+1},\dots,a^d_{m}$ and, $\forall i\in I, D(d)(i)$. Let us fix an arbitrary solution to $\Upsilon'$ and use the values assigned to variables $D(d)(i)$, $\forall d\in R\setminus I, i\in I$, to define $D$. Observe that $S=(I,D)$ was chosen so that $\Upsilon_S$ is guaranteed to have a solution.

Next, we argue that $|I|+\rk(\matM[k+1:m,*])=\rk(\matM')$. Indeed, assume for a contradiction that there exists a row $r\in R\cap I$ which can be obtained as a linear combination of the rows in $I$ and in $\matM[k+1:m,*]$; then, we could replace $r$ in $B$ by a row $r'$ from $\matM[k+1:m,*]$ which would violate the minimality of $|B\cap R|$. So, $|I|+\rk(\matM[k+1:m,*])=\rk(\matM')$ which means that our algorithm is guaranteed to set $S$ as a valid branch, and hence, will either output a matrix compatible with $\matM$ which matches $S$, or a matrix compatible with $\matM$ which matches a different signature but has the same rank as $\matM'$.
}
\end{proof}
\sv{Since the the transpose of $\matM$ has the same rank as $\matM$, it
  follows immediately that \textsc{$p$-RMC}[\col{}] is in \FPT{}.}
\lv{Since the row-rank of a matrix $\matM$ is equal to its column-rank, the transpose of $\matM$ has the same rank as $\matM$. Hence:}

\begin{COR}
\label{cor:mcrbcol}
\textsc{$p$-RMC}\emph{[\col{}]} is in \FPT{}.
\end{COR}

As a consequence of the running time of the algorithm given in the
proof of Theorem~\ref{the:mcrbrow}, we obtain:
\begin{COR}
\label{cor:XP}
\textsc{RMC}\emph{[\row{}]} and \textsc{RMC}\emph{[\col{}]} are in \XP.
\end{COR}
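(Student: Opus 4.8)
The plan is to reuse, essentially verbatim, the branch-and-bound algorithm from the proof of Theorem~\ref{the:mcrbrow} for \textsc{$p$-RMC}[\row{}], and simply re-examine its running time when $p$ is no longer fixed but supplied as part of the input. First I would check that nothing in that algorithm --- the enumeration of the at most $2^k\cdot p^{k^2}$ signatures $(I,D)$, the construction of the linear system $\Upsilon_S$ for each signature, its solution by Gaussian elimination over $\GF(p)$, and the concluding test $|I|+\rk(\matM[k+1:m,*])\le t$ --- ever relied on $p$ being a constant. Since it does not, the algorithm stays correct for an arbitrary prime $p$ and outputs a completion of rank at most $t$ whenever one exists.

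The heart of the argument is then the running-time bookkeeping. With $k=\row{}$, Theorem~\ref{the:mcrbrow} gives total running time $\bigO{2^k p^{k^2}\cdot (m+kn)^3}$, and I would fix the parameter value $k$ and bound this by $N^{f(k)}$, where $N$ denotes the input size. The factor $2^k$ is constant for fixed $k$, and $(m+kn)^3$ is polynomial in $N$; the only delicate factor is $p^{k^2}$. For fixed $k$ this is a polynomial in $p$ of degree $k^2$, and because $p$ is now part of the input we have $p\le N$, so $p^{k^2}\le N^{k^2}$. Multiplying the three contributions yields a bound of the form $N^{\bigO{k^2}}$, i.e.\ a running time of $\bigO{N^{f(k)}}$ with $f(k)=\bigO{k^2}$ --- precisely membership in \XP{}.

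I expect the one point a careful reader will want pinned down to be exactly this treatment of $p^{k^2}$: in the bounded-domain setting $p$ was a constant and could be swallowed by the parameter-dependent factor of an \FPT{} bound, whereas here it must instead be absorbed into the \emph{polynomial} part of an \XP{} bound, which is legitimate only because $p$ contributes to the input size. This is also the structural reason the result weakens from \FPT{} to \XP{} on passing to the unbounded domain: enumerating the coefficient map $D$ costs $p^{k^2}$, a genuine polynomial-in-$N$ blow-up whose exponent grows with $k$.

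Finally, for \textsc{RMC}[\col{}] I would argue exactly as in Corollary~\ref{cor:mcrbcol}: passing to the transpose of $\matM$ interchanges rows and columns, hence interchanges the parameters \row{} and \col{}, while preserving the rank and (up to transposition) the set of consistent completions. Running the \XP{} algorithm for \textsc{RMC}[\row{}] on the transpose therefore solves \textsc{RMC}[\col{}] within the same time bound, so it too lies in \XP{}.
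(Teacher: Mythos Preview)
Your proposal is correct and follows exactly the paper's approach: the paper derives the corollary simply ``as a consequence of the running time of the algorithm given in the proof of Theorem~\ref{the:mcrbrow}'', and your argument spells out precisely this --- that the $p^{k^2}$ factor becomes $N^{\bigO{k^2}}$ once $p$ is part of the input, with the \col{} case handled by transposition as in Corollary~\ref{cor:mcrbcol}. Your discussion of why the result degrades from \FPT{} to \XP{} is more explicit than the paper's one-line justification, but the underlying reasoning is identical.
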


\subsection{Bounded Domain: Parameterization by \normalfont{comb}}
In this subsection, we present a randomized fixed-parameter algorithm for \textsc{$p$-RMC}[comb] with constant one-sided error probability.
Before we proceed to the algorithm, we need to introduce some basic terminology related to systems of equations. Let $\Upsilon$ be a system of $\ell$ equations $\EQ_1$, $\EQ_2$,\dots, $\EQ_\ell$ over $\GF(p)$; we assume that the equations are simplified as much as possible. In particular, we assume that no equation contains two terms over the same set of variables such that the degree/exponent of each variable in both terms is the same.
Let $\EQ_i$ be a linear equation in $\Upsilon$, and let $x$ be a variable which occurs in $\EQ_i$ (with a non-zero coefficient).
Naturally, $\EQ_i$ can be transformed into an equivalent equation
$\EQ_{i,x}$, where $x$ is isolated, and we use $\Gamma_{i,x}$ to
denote the side of $\EQ_{i,x}$ not containing $x$, i.e., $\EQ_{i,x}$ is of the form $x=\Gamma_{i,x}$.
We say that $\Upsilon'$ is obtained from $\Upsilon$ by \emph{substitution of $x$ in $\EQ_i$} if $\Upsilon'$ is the system of equations obtained by:
\begin{enumerate}
\item computing $\EQ_{i,x}$ and in particular $\Gamma_{i,x}$ from $\EQ_i$;
\item setting $\Upsilon':=\Upsilon\setminus \{\EQ_i\}$; and
\item replacing $x$ with $\Gamma_{i,x}$ in every equation in $\Upsilon'$.
\end{enumerate}

Observe that $\Upsilon'$ has size $\bigO{n\cdot \ell}$, and can also be computed in time $\bigO{n\cdot \ell}$, where $n$ is the number of variables occurring in $\Upsilon$. Furthermore, any solution to $\Upsilon'$ can be transformed into a solution to $\Upsilon$ in linear time, and similarly any solution to $\Upsilon$ can be transformed into a solution to $\Upsilon'$ in linear time (\ie, $\Upsilon'$ and $\Upsilon$ are equivalent). Moreover, $\Upsilon'$ contains at least one fewer variable and one fewer equation than $\Upsilon$.

The following proposition is crucial for our proof, and is of
independent interest.
\begin{PRO}\label{pro:quadratic}
  Let $\Upsilon$ be a system of $\ell$ quadratic equations over
  $\GF(p)$.
  Then computing a solution for
  $\Upsilon$ is in \RFPT{} parameterized by $\ell$ and $p$, and in \RXP{} parameterized only by $\ell$.
\end{PRO}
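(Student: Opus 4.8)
The plan is to dispatch the instance into one of two regimes: one in which the system is so underdetermined that the algorithm of Miura et al.\ applies directly, and one in which the number of variables is bounded in terms of $\ell$ and can therefore simply be searched. Throughout, let $n$ denote the number of variables occurring in $\Upsilon$, and let $g$ be the threshold function guaranteed by Miura et al., so that any solvable system of at most $\ell$ quadratic equations over $\GF(p)$ in at least $g(\ell)$ variables admits a solution computable in randomized polynomial time. The only property of $g$ needed for membership is that it depends on $\ell$ alone (and is computable); the improved bound of Miura et al.\ merely sharpens the resulting dependence on the parameter.

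Before branching, I would first clear out the linear equations. As long as some $\EQ_i$ has degree at most one and mentions a variable $x$, I apply the substitution of $x$ in $\EQ_i$ defined above; this removes one equation and one variable while keeping every surviving equation of degree at most two, since substituting a linear expression into a quadratic term again yields an expression of degree at most two. This process either exposes an equation of the form $c=0$ with $c\neq 0$, in which case $\Upsilon$ is infeasible and we reject, or terminates with an equivalent system of $\ell'\le\ell$ genuinely quadratic equations in $n'$ variables (if no equation survives, any assignment is a solution).

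In the first regime, $n'\ge g(\ell')$, I invoke Miura et al.\ on the reduced system. Since any candidate it returns can be verified against the system in time polynomial in $\size{\Upsilon}$, the procedure never reports a spurious solution, and it succeeds with constant probability whenever a solution exists; this is exactly the required one-sided error. As this runs in time polynomial in $\size{\Upsilon}$ (field arithmetic being polynomial in $\log p$), it contributes only a polynomial factor to both claimed running times. In the second regime, $n'<g(\ell')\le g(\ell)$, the number of variables is bounded by a function of $\ell$, so I enumerate all $p^{n'}\le p^{g(\ell)}$ assignments and test each against the reduced system. The factor $p^{g(\ell)}$ has the form $f(\ell,p)$, giving \RFPT{} when $p$ is an additional parameter; and since the field is part of the input we have $p\le \size{\Upsilon}$, so the same factor is at most $\size{\Upsilon}^{g(\ell)}$, giving \RXP{} when only $\ell$ is the parameter. (If one prefers not to assume $p\le\size{\Upsilon}$, the bounded-variable system can instead be solved in time $\size{\Upsilon}^{\bigO{g(\ell)}}$, polynomial in $\log p$, by standard elimination techniques for polynomial systems in a bounded number of variables, which again yields \RXP{}.)

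The main obstacle is the careful use of the underdetermined solver in the first regime. The theorem of Miura et al.\ is naturally phrased as \emph{finding} a solution of a \emph{solvable} system, whereas here we must also correctly certify infeasibility; the point to argue is that pairing the solver with verification of returned candidates---and reading its failure to return one as the verdict ``no solution''---turns it into a genuine decision procedure with one-sided error rather than a heuristic presupposing solvability. A secondary technical point is that the threshold $g$ must be independent of $p$, so that the case distinction, and hence the bound on the number of variables in the second regime, is uniform in the field size.
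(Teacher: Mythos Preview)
Your proof is correct and follows essentially the same approach as the paper: split on whether the number of variables exceeds the Miura--Hashimoto--Takagi threshold, invoke their randomized solver in the underdetermined regime, and brute-force over all $p^{n}$ assignments otherwise. The paper uses the explicit threshold $n\geq \ell(\ell+3)/2$ and skips your linear-elimination preprocessing (which is harmless but unnecessary here, since the proposition already assumes all $\ell$ equations are quadratic); your added discussion of one-sided error and of the dependence of the brute-force cost on $p$ is more careful than the paper's, which simply records the combined bound $\bigO{2^\ell n^3 \ell(\log p)^2+p^{\ell(\ell+3)/2}\ell^2}$.
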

\begin{proof}
  Let $n$ be the number of variables in $\Upsilon$. We distinguish two cases.
  If $n \geq \ell(\ell+3)/2$, then $\Upsilon$ can
  be solved in randomized time $\bigO{2^\ell n^3 \ell(\log p)^2}$~\cite{MiuraHT14}.
  Otherwise, $n < \ell(\ell+3)/2$, and we can solve $\Upsilon$ by a
  brute-force algorithm which enumerates (all of the) at most
  $p^{n}< p^{\ell(\ell+3)/2}$ assignments of values
  to the variables in $\Upsilon$. The proposition now follows by
  observing that the given algorithm runs in time $\bigO{2^\ell n^3 \ell(\log p)^2+p^{\ell(\ell+3)/2}\ell^2}$.
\end{proof}

\begin{THE}
\label{thm:rbcomb}
\textsc{$p$-RMC}\emph{[\comb]} is in \RFPT.
\end{THE}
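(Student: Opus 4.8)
The plan is to lift the branch-and-bound strategy of Theorem~\ref{the:mcrbrow} to the two-sided setting, where the missing entries are confined to a set $R$ of $r$ rows and a set $C$ of $c$ columns with $r+c=\comb=k$ (computed via Proposition~\ref{pro:compcomb}). After permuting so that $R=\{1,\dots,r\}$ and $C=\{1,\dots,c\}$, write $B=\{r+1,\dots,m\}$ and $\bar C=\{c+1,\dots,n\}$. The crucial structural observation is that the block $\matM[B,\bar C]$ is \emph{fully determined}; let $\rho=\rk(\matM[B,\bar C])$, a fixed number computable up front. For any completion $\matM'$ the rank decomposes as $\rk(\matM')=\rho+|I|+|J|$, where $I\subseteq R$ is a maximal set of rows of $R$ independent modulo the row space of $\matM'[B,*]$, and $J\subseteq C$ is a maximal set of columns of $C$ independent, within the rows $B$, modulo the column space of $\matM'[B,\bar C]$. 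This follows from two applications of the identity ``rank $=$ rank of a submatrix $+$ number of additionally independent rows/columns'', and in fact $|I|$ and $|J|$ are forced (independent of the choices), so the decomposition holds for every completion.

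Accordingly, I would branch over all combined \emph{signatures} $S=(I,D_I,J,D_J)$, where $D_I$ assigns to each $d\in R\setminus I$ coefficients in $\GF(p)$ describing its dependence on $I$, and $D_J$ assigns to each $e\in C\setminus J$ coefficients describing its dependence on $J$. The number of signatures is at most $2^{k}p^{k^2}$, which is fine for branching, and branches with $\rho+|I|+|J|>t$ are discarded immediately. For a retained branch I build a system $\Upsilon_S$ over $\GF(p)$ whose variables are the missing entries together with fresh ``combination coefficients'' $a^d_j$ ($d\in R\setminus I$, $j\in B$) and $b^e_\ell$ ($e\in C\setminus J$, $\ell\in\bar C$). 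It contains, for each $d\in R\setminus I$ and each column, the coordinate equation $\matM'[d,*]=\sum_{j\in B}a^d_j\matM'[j,*]+\sum_{i\in I}D_I(d)(i)\cdot i$, and symmetrically, for each $e\in C\setminus J$ and each row of $B$, the column-dependence equation expressing $\matM'[B,e]$ through the $J$-columns and $\bar C$-columns.

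The key accounting step is to classify these equations. Every column-dependence equation is \emph{linear}, because each of its products is either a coefficient variable $b^e_\ell$ times a constant entry of the determined block $\matM[B,\bar C]$, or a fixed signature coefficient $D_J(e)(e')$ times a missing entry. A row-dependence coordinate equation is likewise linear whenever its column index lies in $\bar C$. The \emph{only} source of nonlinearity is a product $a^d_j\cdot\matM'[j,\ell]$ with $j\in B$ and $\ell\in C$, i.e.\ a coefficient variable times a missing entry of $\matM[B,C]$; hence the quadratic equations are exactly the row-dependence equations indexed by $(d,\ell)$ with $d\in R\setminus I$ and $\ell\in C$, of which there are at most $|R\setminus I|\cdot|C|\le rc\le k^2$, a number bounded solely by the parameter. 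I would then repeatedly apply the substitution operation to eliminate the (unboundedly many) linear equations one at a time---each substitution removes one linear equation and one variable while keeping the quadratic equations quadratic---until only the at most $k^2$ quadratic equations survive (a linear equation collapsing to $0=c$ with $c\neq0$ certifies the branch infeasible). Solving this residual system via Proposition~\ref{pro:quadratic}, with the number of equations $\le k^2$ and the fixed prime $p$, places the whole procedure in \RFPT{}, and a satisfying assignment yields the completion directly.

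For correctness, soundness is straightforward: in a satisfiable retained branch the equations force every row of $R\setminus I$ into the span of $\matM'[B,*]\cup I$ and every column of $C\setminus J$ into the span of the $J$- and $\bar C$-columns within $B$, whence $\rk(\matM')\le\rho+|I|+|J|\le t$. For completeness, from any completion of rank $\le t$ I read off $I$, $J$ as the maximal sets above and $D_I$, $D_J$ from the actual combination coefficients; by maximality every remaining row/column lies in the appropriate span, so $\Upsilon_S$ is satisfied by the completion and $\rho+|I|+|J|=\rk(\matM')\le t$, meaning the branch is retained. The step requiring the most care---and the main obstacle---is the two-sided rank bookkeeping: confirming that bounding $\rk(\matM'[B,*])$ through a column signature composes with the row signature to yield \emph{exactly} $\rk(\matM')=\rho+|I|+|J|$ without double counting, and, most importantly for the running time, verifying that no equations other than the $\le k^2$ row-dependence equations indexed by columns of $C$ carry nonlinear terms, so that the substitution phase genuinely reduces $\Upsilon_S$ to a system of parameter-many quadratic equations.
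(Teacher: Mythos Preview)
Your proposal is correct and follows essentially the same route as the paper: branch over two-sided signatures $(I,D_I,J,D_J)$, write the row/column dependency constraints as a polynomial system, observe that the only quadratic equations are the at most $rc\le k^2$ row-dependence equations indexed by columns of $C$, substitute away the linear equations, and finish with Proposition~\ref{pro:quadratic}. Your rank decomposition $\rk(\matM')=\rho+|I|+|J|$ and the soundness/completeness outline match the paper's Claims~1 and~2, and your linearity accounting for the column-dependence equations (coefficient variable times a determined entry, or fixed signature coefficient times a missing entry) is exactly the paper's reasoning.
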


\lv{\begin{proof}}
\sv{\begin{proof}[Proof Sketch.]}
We begin by using Proposition~\ref{pro:compcomb} to compute the sets $R$ and $C$ containing the indices of the covering rows and columns, respectively; let $|R|=r$ and $|C|=c$, and recall that the parameter value is $k=r+c$.
Since the existence of a solution for \textsc{$p$-RMC} does not change if we permute rows and columns of $\matM$, we permute the rows of $\matM$ so that the rows in $R$ have indices $1,\dots,r$, and subsequently, we permute the columns of $\matM$ so that the columns in $C$ have indices $1,\dots,c$.


Before we proceed, let us give a high-level overview of our strategy. The core idea is to branch over signatures, which will be defined in a similar way to those in Theorem~\ref{the:mcrbrow}. These signatures will capture information about the dependencies among the rows in $R$ and columns in $C$; one crucial difference is that for columns, we will focus only on dependencies in the submatrix $\matM[r+1:m,*]$\lv{
(the reason will become clear later, when we argue correctness)}.
In each branch, we arrive at a system of equations that needs to be
solved in order to determine whether the signatures are valid. Unlike
Theorem~\ref{the:mcrbrow}, here the obtained system of equations will
contain non-linear (but quadratic) terms, and hence solving the system is far from being trivial.
Once we determine which signatures are valid, we choose one that minimizes the total rank.

For the first step, let us define the notion of signature that will be used in this proof. A \emph{signature} $S$ is a tuple $(I_R, D_R, I_C, D_C)$ where:
\sv{1. $I_R\subseteq R$; 2. $D_R$ is a mapping from $R\setminus I_R$ to $(I_R\rightarrow \GF(p))$; 3. $I_C\subseteq C$; and 4. $D_C$ is a mapping from $C\setminus I_C$ to $(I_C\rightarrow \GF(p))$.}
\lv{
\begin{enumerate}
\item $I_R\subseteq R$;
\item $D_R$ is a mapping from $R\setminus I_R$ to $(I_R\rightarrow \GF(p))$;
\item $I_C\subseteq C$; and
\item $D_C$ is a mapping from $C\setminus I_C$ to $(I_C\rightarrow \GF(p))$.
\end{enumerate}}

We say that a matrix $\matM'$ compatible with the incomplete matrix $\matM$ \emph{matches} a signature $(I_R, D_R, I_C, D_C)$ if:
\begin{itemize}
\item  for each row $d\in R\setminus I_R$, there exist coefficients $a^d_{r+1},\dots, a^d_{m}\in \GF(p)$ such that $d=a^d_{r+1}\matM'[r+1,*]+\dots+a^d_{m}\matM'[m,*]+\sum_{i\in I_R}D_R(d)(i)\cdot i$; and
\item for each column $h\in C\setminus I_c$, there exist coefficients $b^h_{c+1},\dots, b^h_{n}\in \GF(p)$ such that $h[r+1:m]=b^h_{c+1}\matM'[r+1:m,c]+\dots+b^h_{n}\matM'[r+1:m,n]+\sum_{i\in I_C}D_C(h)(i)\cdot i[r+1:m]$.
\end{itemize}

The number of distinct signatures is upper-bounded by $2^r\cdot p^{r^2}\cdot 2^c\cdot p^{c^2}\leq 2^k\cdot p^{k^2}$, and the first step of the algorithm branches over all possible signatures $S$.
\sv{In the second step, for each enumerated signature $S$, we check whether $S$ is valid (\ie, whether there exists a matrix $\matM'$, compatible with the incomplete $\matM$, that matches $S$) in a similar fashion as in the proof of Theorem~\ref{the:mcrbrow}. Here, this results in a system $\Upsilon_S$ of $|R\setminus I_R|\cdot n+|C\setminus I_C|\cdot (m-r)$ equations which check the dependencies for rows in $R\setminus I_R$ and columns in $C\setminus I_C$. For instance, the first equation in $\Upsilon_S$ for some $d\in R\setminus I_R$ has the following form:
$d[1]=a^d_{r+1}\matM[r+1,1]+\dots+a^d_{m}\matM[m,1]+\sum_{i\in I_R}D_R(d)(i)\cdot i[1]$, where $a^d_{r+1},\dots, a^d_{m}$ are variables, $D_R(d)(i)$ is a number, and all other occurrences are either variables or numbers. Similarly, the second equation in $\Upsilon_S$ for some $h\in C\setminus I_C$ has the following form: $h[r+2]=b^h_{c+1}\matM[r+2,c+1]+\dots+b^h_{n}\matM[r+2,n]+\sum_{i\in I_C}D_C(d)(i)\cdot i[r+2]$, where $b^h_{c+1},\dots, b^h_{n}$ are variables, $D_C(d)(i)$ is a number, and all other occurrences are either variables or numbers.

Next, observe that the only equations in $\Upsilon_S$ that may contain
non-linear terms are those for $d[j]$, where $j\leq c$, and in
particular $\Upsilon_S$ contains at most $k^2$ equations with
non-linear terms ($k$ equations for at most $k$ vectors $d$ in
$R\setminus I_R$). We will now use substitutions to simplify
$\Upsilon_S$ by removing all linear equations; specifically, at each
step we select an arbitrary linear equation $\EQ_i$ containing a
variable $x$, apply substitution of $x$ in $\EQ_i$ to construct a new
system of equations with one fewer equation, and simplify all
equations in the new system. If at any point we reach a system of
equations that contains an invalid equation (\eg, 2=5), then
$\Upsilon_S$ does not have a solution, and we discard the
corresponding branch. Otherwise, after at most $|R\setminus I_R|\cdot
n+|C\setminus I_C|\cdot (m-r)\in \bigO{kn+km}$ substitutions, we
obtain a system of at most $k^2$ quadratic equations $\Psi_S$ such
that any solution to $\Psi_S$ can be transformed into a solution to
$\Upsilon_S$ in time $\bigO{kn+km}$. We can now apply
Proposition~\ref{pro:quadratic} to solve $\Psi_S$ and mark $S$ as a
valid signature if $\Psi_S$ has a solution.


After all signatures have been processed, the algorithm selects a
valid signature $S=(I,D)$ that has the minimum value of $|I_R|+|I_C|$,
checks whether $|I_R|+|I_C|+\rk(\matM[r+1:m,c:1+n])\leq t$, and either
uses this to construct a solution (similarly to the proof of
Theorem~\ref{the:mcrbrow}), or outputs ``no''.
The theorem now follows by observing that the total running time of
the algorithm is obtained by combining the branching factor of branching over all signatures ($\bigO{2^k\cdot p^{k^2}}$) with the run-time of Proposition~\ref{pro:quadratic} for $k^2$ many quadratic equations ($\bigO{3^{k^2} n^3 (\log p)^2+p^{k^4}}$). In particular, we obtain a running time of $\bigO{3^{k^2} \cdot p^{k^4} \cdot n^3}$.\end{proof}
}
\lv{
For the second step, fix an enumerated signature $S$. The algorithm will verify whether $S$ is \emph{valid}, \ie, whether there exists a matrix $\matM'$, compatible with the incomplete $\matM$, that matches $S$. To do so, the algorithm will construct a system of $|R\setminus I_R|$ equations over vectors of size $n$ and of $|C\setminus I_C|$ equations over vectors of size $m-r$, and then transform this into a system $\Upsilon_S$ of $|R\setminus I_R|\cdot n+|C\setminus I_C|\cdot (m-r)$ equations over $\GF(p)$ (one equation for each vector coordinate). For each $d\in R\setminus I_R$, $\Upsilon_S$ contains the following variables:
\begin{itemize}
\item one variable for each $a^d_{r+1},\dots, a^d_{m}$, and
\item one variable for each occurrence of $\bullet$.
\end{itemize}
For instance, the first equation in $\Upsilon_S$ for some $d\in R\setminus I_R$ has the following form:
$d[1]=a^d_{r+1}\matM[r+1,1]+\dots+a^d_{m}\matM[m,1]+\sum_{i\in I_R}D_R(d)(i)\cdot i[1]$, where $a^d_{r+1},\dots, a^d_{m}$ are variables, $D_R(d)(i)$ is a number, and all other occurrences are either variables or numbers. Crucially, for all $j>c$, the equations for $d[j]$ defined above contain only linear terms; however, for $j\in [c]$ these equations may also contain non-linear terms (in particular, $a^d_{r+1},\dots,a^d_{m}$ are variables and $\matM[r+1,j],\dots,\matM[m,j]$ can contain $\bullet$ symbols, which correspond to variables in the equations).
For $z\in [m]$ and $y\in [n]$, if an element $\matM[z,y]$ contains $\bullet$, then we will denote the corresponding variable used in the equations as $x_{z,y}$.

Next, for each $h\in C\setminus I_C$, $\Upsilon_S$ contains the following variables:
\begin{itemize}
\item one variable for each $b^h_{c+1},\dots, b^h_{n}$, and
\item one variable for each occurrence of $\bullet$.
\end{itemize}
For instance, the second equation in $\Upsilon_S$ for some $h\in C\setminus I_C$ has the following form:
$h[r+2]=b^h_{c+1}\matM[r+2,c+1]+\dots+b^h_{n}\matM[r+2,n]+\sum_{i\in I_C}D_C(d)(i)\cdot i[r+2]$, where $b^h_{c+1},\dots, b^h_{n}$ are variables, $D_C(d)(i)$ is a number, and all other occurrences are either variables or numbers. Observe that all of these equations for $h$ are linear, since the submatrix $\matM[r+1:m,c+1:n]$ contains no $\bullet$ symbols.

This completes the definition of our system of equations $\Upsilon_S$. Recall that the only equations in $\Upsilon_S$ that may contain non-linear terms are those for $d[j]$ when $j\leq c$, and in particular $\Upsilon_S$ contains at most $k^2$ equations with non-linear terms ($k$ equations for at most $k$ vectors $d$ in $R\setminus I_R$). We will now use substitutions to simplify $\Upsilon_S$ by removing all linear equations; specifically, at each step we select an arbitrary linear equation $\EQ_i$ containing a variable $x$, apply substitution of $x$ in $\EQ_i$ to construct a new system of equations with one fewer equation, and simplify all equations in the new system. If at any point we reach a system of equations which contains an invalid equation (\eg, 2=5), then $\Upsilon_S$ does not have a solution, and we discard the corresponding branch. Otherwise, after at most $|R\setminus I_R|\cdot n+|C\setminus I_C|\cdot (m-r)\in \bigO{kn+km}$ substitutions we obtain a system of at most $k^2$ quadratic equations $\Psi_S$ such that any solution to $\Psi_S$ can be transformed into a solution to $\Upsilon_S$ in time at most $\bigO{kn+km}$.
We can now apply
Proposition~\ref{pro:quadratic} to solve $\Psi_S$ and mark $S$ as a
valid signature if $\Psi_S$ has a solution.

After all signatures have been processed, in the third---and final---step we select a valid signature $S=(I,D)$ that has the minimum value of $|I_R|+|I_C|$. The algorithm will then check whether $|I_R|+|I_C|+\rk(\matM[r+1:m,c:1+n])\leq t$. If this not the case, the algorithm rejects the instance. Otherwise, the algorithm recomputes a solution to $\Upsilon_S$, and outputs the matrix $\matM'$ obtained from $\matM$ by replacing each occurrence of $\bullet$ at position $[i,j]$ in $\matM'$ by $x_{i,j}$.


We now proceed to proving the correctness of the algorithm. We do so by proving the following two claims:


\begin{CLM}
If there exists a signature $S=(I_R, D_R, I_C, D_C)$ for $\matM$ such that $|I_R|+|I_C|+\rk(\matM[r+1:m,c:1+n])\leq t$, then there exists a matrix $\matM'$ compatible with $\matM$ such that $\rk(\matM')\leq |I_R|+|I_C|+\rk(\matM[r+1:m,c:1+n])$. In particular, if $S$ is marked as valid by the algorithm, then the algorithm outputs a matrix $\matM'$ satisfying the above.
\end{CLM}

\begin{proof}[Proof of Claim]
\renewcommand{\qedsymbol}{$\blacksquare$}
Since $S$ is valid, the system of equations $\Upsilon_S$ has a solution; fix one such solution. Consider the matrix $\matM'$ obtained from $\matM$ by replacing each occurrence of $\bullet$ at position $[i,j]$ by the value of $x_{i,j}$ from the selected solution to $\Upsilon_S$. Then the solution to $\Upsilon_S$ guarantees that each row in $R\setminus I_R$ can be obtained as a linear combination of rows in $\matM'\setminus (R\setminus I_R)$, and hence deleting all rows in $R\setminus I_R$ will result in a matrix $\matM_1'$ such that $\rk(\matM')=\rk(\matM_1')$.

Next, consider the matrix $\matM'[r+1:m,*]$ which is obtained by removing all rows in $I_R$ from $\matM_1'$; clearly, this operation decreases the rank by at most $|I_R|$, and hence $\rk(\matM'[r+1:m,*])\leq \rk(\matM_1')\leq \rk(\matM'[r+1:m,*])+|I_R|$.

Third, consider the matrix $\matM_2'$ obtained from $\matM'[r+1:m,*]$ by removing all columns in $C\setminus I_C$. The solution to $\Upsilon_S$ guarantees that each removed column can be obtained as a linear combination of columns in $\matM'[r+1:m,*]\setminus (C\setminus I_C)$, and hence, $\rk(\matM'[r+1:m,*)=\rk(\matM_2')$. Finally, we consider the matrix $\matM'[r+1:m,c+1:n]=\matM[r+1:m,c+1:n]$ which is obtained by removing all columns in $I_C$ from $\matM_2'$. Clearly, removing $|I_C|$ columns decreases the rank by at most $|I_C|$, and hence $\rk(\matM[r+1:m,c+1:n])\leq \rk(\matM_2')\leq \rk(\matM[r+1:m,c+1:n])+|I_C|$. Putting the above inequalities together, we get $\rk(\matM')\leq \rk(\matM'[r+1:m,*])+|I_R|\leq \rk(\matM[r+1:m,c+1:n])+|I_C|+|I_R|$.
\end{proof}


\begin{CLM}
If there exists a matrix $\matM'$ compatible with $\matM$ such that $\rk(\matM')\leq t$, then there exists a valid signature $S=(I_R, D_R, I_C, D_C)$ such that $|I_R|+|I_C|+\rk(\matM[r+1:m,c:1+n])\leq t$.
\end{CLM}

\begin{proof}[Proof of Claim]
\renewcommand{\qedsymbol}{$\blacksquare$}
Consider the following iterative procedure that creates a set $I_R$ from the hypothetical matrix $\matM'$. Check, for each row $r\in R$, whether $R$ can be obtained as a linear combination of all other rows in $\matM'$, which can be done by solving a system of linear equations; if this is the case, remove $r$ from $\matM'$ and restart from any row in $R$ that remains in $\matM'$. In the end, we obtain a submatrix $\matM'_R$ of $\matM'$ which only contains those rows in $R$ that cannot be obtained as a linear combination of all other rows in $\matM'_R$; let $I_R$ be the set of rows in $R$ that remain in $\matM'_R$. Furthermore, since each row $r'\in R\setminus I_R$ can be obtained as a linear combination of rows in $\matM'_R$, for each such $r'$ we compute a set of coefficients $\tau_{r'}$ that can be used to obtain $r'$ and store those coefficients corresponding to $I_R$ in $D_R$. For instance, if row $r'\in R\setminus I_R$ can be obtained by an additive term containing $1$ times row  $u\in I_R$, then we set $D_R(r')=(u\mapsto 1)$.

At this point, we have identified $I_R$ and $D_R$. Next, we turn our attention to the submatrix $\matM'[r+1:m,*]$, where we proceed similarly but for columns. In particular, for each column $c\in C$ restricted to $\matM'[r+1:m,*]$, we check whether $c$ can be obtained as a linear combination of  all other columns in $\matM'[r+1:m,*]$, and if the answer is positive then we remove $c$ from $\matM'[r+1:m,*]$ and restart from any column in $C$ that remains in $\matM'[r+1:m,*]$. This results in a new submatrix $\matM'_C$ of $\matM'[r+1:m,*]$, and those columns of $C$ that remain in $\matM'_C$ are stored in $I_C$. Then, for each column in $c'\in C\setminus I_C$, we compute a set of coefficients $\tau_{c'}$ that can be used to obtain that column and store the values of the coefficients that correspond to $I_C$ in $D_C$, analogously as we did for the rows.

At this point, we have obtained a signature $S$. The validity of $S$ follows from its construction. Indeed, to solve $\Upsilon_S$, we can set each variable $x_{i,j}$ representing the value of a $\bullet$ symbol at $\matM[i,j]$ to $\matM'[i,j]$, and all other variables will capture the coefficients that were stored in $\tau_{r'}$ and $\tau_{c'}$ for a row $r'$ or a column $c'$, respectively.

Finally, we argue that $|I_R|+|I_C|+\rk(\matM[r+1:m,c:1+n])\leq t$. Since $\matM'_R$ was obtained from $\matM'$ only by deleting linearly dependent rows, $\rk(\matM')=\rk(\matM'_R)$. Furthermore, since $\matM'[r+1:m,*]$ can be obtained by deleting $|I_R|$ rows from $\matM'_R$, and all of these deleted rows are linearly independent of all other rows in $\matM'_R$, we obtain $\rk(\matM'[r+1:m,*])=\rk(\matM'_R)-|I_R|$. By repeating the above arguments,
we see that $\rk(\matM'[r+1:m,*])=\rk(\matM'_C)$ and $\rk(\matM'[r+1:m,c+1:n])=\rk(\matM'_C)-|I_C|$. Recall that $\matM'[r+1:m,c+1:n]=\matM[r+1:m,c+1:n]$. Putting the above together, we obtain $\rk(\matM'[r+1:m,c+1:n])+|I_C|=\rk(\matM'[r+1:m,*])$, and $\rk(\matM'[r+1:m,c+1:n])+|I_C|+|I_R|=\rk(\matM')\leq t$.
\end{proof}

Finally, the total running time of the algorithm is obtained by combining the branching factor of branching over all signatures ($\bigO{2^k\cdot p^{k^2}}$) with the run-time of Proposition~\ref{pro:quadratic} for $k^2$ many quadratic equations ($\bigO{3^{k^2} n^3 (\log p)^2+p^{k^4}}$). We obtain a running time of $\bigO{3^{k^2} \cdot p^{k^4} \cdot n^3}$.
\end{proof}
}

As a consequence of the running time of the algorithm given in the
proof of Theorem~\ref{thm:rbcomb}, we obtain:
\begin{COR}
\label{cor:combxp}
\textsc{RMC}\emph{[\comb]} is in \RXP.
\end{COR}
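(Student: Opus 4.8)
The plan is to observe that the algorithm constructed in the proof of Theorem~\ref{thm:rbcomb} never uses the fact that $p$ is a fixed constant: it branches over all signatures, builds the system $\Upsilon_S$, eliminates the linear equations by substitution to obtain at most $k^2$ quadratic equations, and then calls Proposition~\ref{pro:quadratic} to test solvability. Every one of these steps is well-defined for an arbitrary prime $p$ supplied as part of the input. Hence I would simply run the very same algorithm on instances of \textsc{RMC}[\comb], inheriting its correctness and its constant one-sided error probability verbatim from Theorem~\ref{thm:rbcomb} and Proposition~\ref{pro:quadratic}. The only thing that genuinely needs re-examination is the running time: once $p$ is no longer treated as a constant, I must check that the bound still has the shape required for membership in \RXP.

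For this I would revisit the running-time bound $\bigO{3^{k^2}\cdot p^{k^4}\cdot n^3}$ established for the algorithm, where $k=\comb$ and $n$ is the number of columns. Let $N$ denote the total input size. Since $p$ is now part of the input we have $p\le N$, so $p^{k^4}\le N^{k^4}$; likewise $n\le N$ gives $n^3\le N^3$, and for $N\ge 2$ the parameter-only factor is bounded by $3^{k^2}\le N^{2k^2}$. Multiplying these three estimates, the algorithm runs in time $\bigO{N^{k^4+2k^2+3}}=\bigO{N^{f(k)}}$ with $f(k)=k^4+2k^2+3$. A randomized algorithm with constant one-sided error running in time $\bigO{N^{f(k)}}$ is exactly what is demanded by \RXP, which yields the corollary.

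The only non-routine point, and hence what I would flag as the crux, is the passage from the \RFPT\ bound to the \RXP\ bound. When $p$ is fixed, the factor $p^{k^4}$ is a function of the parameter alone and is harmlessly absorbed into $f(k)$; when $p$ is read from the input it instead becomes a polynomial factor whose degree depends on the parameter, which is precisely the defining feature of \XP\ rather than \FPT. The argument therefore rests on $p$ being polynomially bounded in the input size (here $p\le N$, as $p$ is part of the input, under the same encoding convention underlying Corollary~\ref{cor:XP}); this is the single ingredient separating the bounded-domain result of Theorem~\ref{thm:rbcomb} from the unbounded-domain statement claimed here.
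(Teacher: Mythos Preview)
Your proposal is correct and follows exactly the paper's approach: the corollary is stated in the paper simply ``as a consequence of the running time of the algorithm given in the proof of Theorem~\ref{thm:rbcomb},'' and your write-up merely makes explicit the arithmetic showing that $3^{k^2}\cdot p^{k^4}\cdot n^3$ becomes $N^{f(k)}$ once $p$ is part of the input. Your flagging of the encoding convention (that $p\le N$, as implicitly assumed already for Corollary~\ref{cor:XP}) is a useful clarification that the paper leaves tacit.
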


\newcommand{\comG}{G}

\newcommand{\PIC}{\textsc{CC}}
\newcommand{\PIT}{\textsc{PIT}}
\newcommand{\TSATT}{\textsc{3-Sat-2}}

\section{\textsc{Bounded Distinct Row Matrix Completion}}
\label{sec:fr}

Let $(p,\matM,t)$ be an instance of \uDRM{}.
We say that two rows of $\matM$ are \emph{compatible}
if whenever the two rows differ at some entry then one of the rows has
a $\bullet$ at that entry.
The \emph{compatibility graph} of $\matM$, denoted by $\comG(\matM)$, is the undirected graph
whose vertices correspond to the row indices of $\matM$ and in which there is an edge between two
vertices if and only if their two corresponding rows are compatible. See Figure~\ref{fig:compatibilitygraph}
for an illustration.

\begin{figure}[htbp]
 \footnotesize
 \vspace{-0.1cm}
\begin{subfigure}{.4\textwidth}
\begin{equation*}
     \left(\begin{array}{cccccc}
     1 & \bullet & 0 & \bullet & \bullet & 1 \\
     1 & 0 & 0 & 1 & \bullet & \bullet \\
     1 & 0 & \bullet & 1 & 0 & 1 \\
     1 & 0 & 1 & 1 & 0 & \bullet \\
     1 & 0 & 1 & 1 & 0 & 0 \\
        \end{array}\right)
\end{equation*}
\end{subfigure}%
\hspace{0.5cm}\begin{subfigure}{.4\textwidth}
 \begin{tikzpicture}

\tikzstyle{every node}=[]
\tikzstyle{gn}=[circle, inner sep=2pt,draw]
\tikzstyle{dots}=[circle, inner sep=1pt,draw]
\tikzstyle{every edge}=[draw, line width=1.5pt]

\draw
node[gn, label=above:$1$] (v1) {}
node[gn, above right of=v1, label=right:$2$] (v2) {}
node[gn, below right of=v2, label=above:$3$] (v3) {}
node[gn, right of=v3, label=right:$4$] (v4) {}
node[gn, above of=v4, label=right:$5$] (v5) {}

(v1) edge (v2)
(v1) edge (v3)
(v2) edge (v3)
(v3) edge (v4)
(v4) edge (v5)
;



\end{tikzpicture}
\end{subfigure}
  \vspace{-0.1cm}
\caption{Illustration of a matrix and its compatibility graph. The vertex label indicates the corresponding row number. \vspace{-0.1cm}}
\label{fig:compatibilitygraph}
\end{figure}
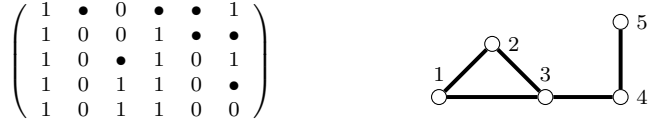

We start by showing that \uDRM{} (and therefore \bDRM{}) can be reduced to the
\textsc{Clique Cover} problem, which is defined as
follows.
\pbDef{\textsc{Clique Cover} (\PIC{})}{An undirected graph $G$ and
  an integer $k$.}{Find a partition of $V(G)$ into
  at most $k$ cliques, or output that no such partition exists.}
\begin{LEM}\label{lem:drm-cc}
  An instance $\III=(p,\matM,t)$ of \uDRM{} has a solution if and only if the
  instance $\III'=(\comG(\matM),t)$ of \PIC{} does. Moreover, a solution for $\III'$ can
  obtained in polynomial-time from a solution for $\III$ and vice versa.
\end{LEM}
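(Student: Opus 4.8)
The plan is to establish a tight correspondence between distinct-row completions of $\matM$ and clique partitions of the compatibility graph $\comG(\matM)$. The guiding intuition is that rows which end up identical in a completed matrix $\matM'$ must have been pairwise compatible in $\matM$ (otherwise they could never be unified), and conversely any set of pairwise-compatible rows can in fact be completed to a single common row. Thus a partition of the rows into $t$ groups of equal rows corresponds exactly to a partition of $V(\comG(\matM))$ into $t$ cliques.

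Let me think through this more carefully before committing to the direction of each implication.

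The forward direction (DRMC solution gives a clique cover): given $\matM'$ consistent with $\matM$ with $\ur(\matM') \le t$, group the row indices according to which distinct row-vector of $\matM'$ they realize; this yields at most $t$ groups. Within each group the rows of $\matM'$ are identical, and since each row of $\matM$ is consistent with (i.e. a $\bullet$-weakening of) the corresponding row of $\matM'$, any two rows of $\matM$ in the same group agree wherever both are determined — which is exactly the compatibility condition. So each group is a clique, and I get a partition into at most $t$ cliques.

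The reverse direction (clique cover gives a DRMC solution) is the one requiring the real argument, and it is where I expect the main obstacle. Given a partition of $V(\comG(\matM))$ into cliques $K_1,\dots,K_{t'}$ with $t' \le t$, I want to build $\matM'$ by choosing, for each clique $K_\ell$, a single completed row that all members of $K_\ell$ will share. The key lemma I need is that a set of rows that is *pairwise* compatible admits a *common* completion. The subtle point here is that compatibility is defined pairwise, so I must verify that pairwise compatibility within a clique already forces, for every column $j$, that all the determined entries appearing in that column across the clique agree on a single value (any missing entries are then free to be set to that value, or arbitrarily if the whole column is missing). I would argue this column-by-column: fix a column $j$ and suppose two rows $a,b$ in the same clique both have determined entries there; pairwise compatibility of $a$ and $b$ forces $\matM[a,j]=\matM[b,j]$. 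Hence all determined entries in column $j$ restricted to $K_\ell$ share one value $v_{\ell,j}$, and setting the common completed row to have value $v_{\ell,j}$ in column $j$ (arbitrary if no determined entry exists) produces a single row consistent with every row in $K_\ell$. Assigning this common row to all indices in $K_\ell$ and doing so for each clique gives $\matM'$ consistent with $\matM$ and with at most $t' \le t$ distinct rows.

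**Polynomial-time transfer and conclusion.**

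Both constructions are plainly computable in polynomial time: extracting the grouping from $\matM'$ and checking the clique property takes time polynomial in $n\cdot m$, and in the reverse direction computing each $v_{\ell,j}$ and writing out $\matM'$ likewise takes polynomial time, so the moreover-clause follows directly. Assembling the two implications yields the stated biconditional, $\III$ has a solution iff $\III'$ does, with solutions transferable in both directions. The only genuinely nontrivial step is the pairwise-to-common compatibility argument in the reverse direction; everything else is bookkeeping, so I would spend the bulk of the written proof making that column-wise consistency argument precise and leave the forward direction as a short paragraph.
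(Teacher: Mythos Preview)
Your proposal is correct and follows essentially the same approach as the paper's proof: in both directions you partition rows by equality in the completed matrix (forward) and, for the reverse, argue column-by-column that within a clique all determined entries agree, so a common completion exists. The paper's write-up is terser but the underlying argument is identical, including the observation that the column-wise consistency step is the only nontrivial point.
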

\sv{\begin{proof}[Proof Sketch]
    The lemma follows immediately from the observation that a set $R$ of
    rows in $\matM$ can be made identical if and only if
    $\comG(\matM)[R]$ is a clique.
  \end{proof}
  }
\lv{\begin{proof}
  Let $\matM'$ be a solution
  for $\III$ and let $\PPP$ be the partition of the indices of the
  rows of $\matM'$ such that two row-indices $r$ and $r'$ belong to
  the same set in $\PPP$ if and only if
  $\matM'[r,*]=\matM'[r',*]$. Then $\PPP$ is also a solution for
  $\III'$, since $\comG[P]$ is a clique for every $P \in \PPP$.

  Conversely, let $\PPP$ be a solution for
  $\III'$. We claim that there is a solution $\matM'$ for $\III$ such
  that $\matM'[r,*]=\matM'[r',*]$ if and only if $r$ and $r'$ are
  contained in the same set of $\PPP$. Towards showing this, consider a set $P \in \PPP$ and a column
  index $c$ of $\matM$, and let $\MEnt(\matM[P,c])$ be the set of all values
  occurring in $\matM[P,c]$. Then $|\MEnt(\matM[P,c])\setminus \{\bullet\}|\leq 1$, that is, all entries of $\matM[P,c]$ that are not
  $\bullet$ are equal; otherwise, $\comG[P]$ would not be a
  clique. Consequently, by replacing every $\bullet$ occurring in
  $\matM[P,c]$ with the unique value in $\MEnt(\matM[P,c])\setminus \{\bullet\}$ if
  $\MEnt(\matM[P,c])\setminus \{\bullet\}\neq \emptyset$, and with an
  arbitrary value otherwise, and by doing so for every column index
  $c$ and every $P \in \PPP$, we obtain
  the desired solution $\matM'$ for $\III$.
\end{proof}
}

\begin{THE}\label{lem:cc-tw}
  \PIC{} is in \FPT{} when parameterized by the treewidth of the input graph.
\end{THE}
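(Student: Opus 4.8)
The plan is to solve \PIC{} by dynamic programming along a nice tree decomposition of $G$, exploiting a structural consequence of bounded treewidth. Since the vertices of any clique pairwise share a bag, the Helly property of subtrees forces every clique of $G$ to lie inside a single bag; hence every clique in any cover has size at most $w+1$, where $w=\tw(G)$. First I would compute a nice tree decomposition $(T,\chi)$ of width $w$, rooted so that $\chi(r)=\emptyset$; standard machinery~\cite{CyganFKLMPPS15} provides such a decomposition with $\bigO{wn}$ nodes in \FPT-time. The second ingredient is the usual separator property: for any node $t$, the bag $\chi(t)$ separates the forgotten vertices $\chi(T_t)\setminus\chi(t)$ from the vertices outside $\chi(T_t)$. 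Consequently, no clique of a cover can contain both a forgotten vertex and a not-yet-introduced vertex, since the edge forced between them could never appear in a common bag.

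For the dynamic program, the state at a node $t$ records a partition $\PPP$ of $\chi(t)$ into blocks, each of which induces a clique in $G$; each block represents the intersection with $\chi(t)$ of one clique of a partial cover of $G[\chi(T_t)]$. To decide where later vertices may still attach, I would additionally mark each block as \emph{pure} (all its vertices currently lie in $\chi(t)$) or \emph{contaminated} (its clique already owns a forgotten vertex); only pure blocks may absorb vertices introduced higher up. The value stored for a state is the minimum number of cliques in a partition of $G[\chi(T_t)]$ whose trace on $\chi(t)$ is exactly $\PPP$, counting both the open cliques (the blocks of $\PPP$) and the cliques already fully forgotten.

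The transitions are then routine. At a \emph{leaf} with $\chi(t)=\{v\}$ the only state is the single pure block $\{v\}$ with value $1$. At an \emph{introduce} node adding $v$, the vertex either starts a new pure block (incrementing the value by one) or joins a pure block $B$ with $B\cup\{v\}$ still a clique (leaving the value unchanged), while joining a contaminated block is forbidden. At a \emph{forget} node removing $v$, the block of $v$ shrinks; if $v$ was its only vertex the clique becomes fully forgotten (the value is inherited, as the clique is still counted), otherwise the surviving block becomes contaminated. At a \emph{join} node the two children must carry the same partition $\PPP$, a block may be contaminated in at most one child (again by the separator property) and inherits that mark, and the values combine as $\mathrm{val}(t_1)+\mathrm{val}(t_2)-|\PPP|$ to avoid double-counting the open cliques. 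The instance is a \textbf{yes}-instance precisely when the root state (the empty partition) has value at most $k$.

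Finally, the number of states per node is bounded by the number of partitions of a $(w+1)$-set times the $2^{w+1}$ choices of pure/contaminated marks, i.e.\ $2^{\bigO{w\log w}}$, and each transition is evaluated in time polynomial in this quantity, so the algorithm runs in time $2^{\bigO{w\log w}}\cdot n$, which is \FPT{} in the treewidth. I expect the main obstacle to be the bookkeeping at the forget and join nodes: one must argue that restricting attachment to pure blocks discards no valid cover, and that the contamination flags together with the separator property make the join count every clique exactly once. Establishing this equivalence between DP states and genuine clique covers—rather than the counting arithmetic itself—is where the real care is needed.
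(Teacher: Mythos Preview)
Your approach is essentially the same as the paper's: a standard bottom-up dynamic program on a nice tree decomposition, with states recording a partition of the current bag into cliques together with the minimum number of cliques needed so far, and with the join rule $\mathrm{val}(t_1)+\mathrm{val}(t_2)-|\PPP|$.

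The one substantive difference is your pure/contaminated marking. The paper's DP does \emph{not} carry such a flag: at an introduce node it allows the new vertex $v$ to join \emph{any} block $P'$ with $G[P'\cup\{v\}]$ a clique, regardless of whether the underlying partial clique already contains forgotten vertices. As you correctly observe via the separator property, a newly introduced vertex can never be adjacent to a vertex that has already been forgotten, so allowing $v$ to join a contaminated block would fabricate a non-clique. (Concretely, on the path $a\!-\!b\!-\!c$ with bags $\{a,b\},\{b\},\{b,c\}$, the paper's transitions yield the record $(\{\{b,c\}\},1)$ at the introduce-$c$ node and hence report a clique cover of size~$1$ at the root.) Your pure/contaminated bookkeeping is precisely what is needed to rule this out, and your remark that the delicate point is the equivalence between DP states and genuine covers at forget/join nodes is well placed. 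So your proposal is correct, follows the paper's overall strategy, and is in fact more careful on exactly the step where care is required.
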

\sv{\begin{proof}[Proof Sketch]
    We show the theorem via a standard dynamic programming
    algorithm on a tree-decomposition of the input
    graph~\cite{BodlaenderKoster08}. Namely after computing an optimal
    tree-decomposition $(T,\chi)$ of the input graph $G$, which can be
    achieved in \FPT-time w.r.t. the treewidth of
    $G$~\cite{Kloks94,Bodlaender96,BodlaenderDDFLP16}, we compute a set
    $\RRR(t)$ of tuples via a bottom-up dynamic programming algorithm for
    every $t \in V(T)$. In our case $(\PPP,c) \in \RRR(t)$ if and
    only if $\PPP$ is a partition
    of $G[\chi(t)]$ into cliques and $c$ is the minimum number such that
    $G[\chi(T_t)]$ has a partition
    $\PPP'$ into $c$ cliques with $\PPP=\SB P' \cap \chi(t) \SM
    P'\in\PPP'\SE\setminus \{\emptyset\}$.
\end{proof}}
\lv{\begin{proof}
  Let $\III=(G,k)$ be an instance of \PIC{}. We will show the Theorem
  using a standard dynamic programming algorithm on a
  tree-decomposition of $G$. Because of Proposition~\ref{pro:comp-tw}
  we can assume that we are given a nice tree-decomposition $(T,\chi)$
  of $G$ of width $\omega$. For every node $t \in V(T)$ we will
  compute the set $\RRR(t)$ of records containing all pairs
  $(\PPP,c)$, where $\PPP$ is a partition of $\chi(t)$ into cliques,
  i.e., for every $P \in \PPP$ the graph $G[P]$ is a clique, and $c$
  is the minimum integer such that $G[\chi(T_t)]$ has a partition
  $\PPP'$ into $c$ cliques with $\PPP=\SB P' \cap \chi(t) \SM
  P'\in\PPP'\SE\setminus \{\emptyset\}$. Note that $\III$ has a
  solution if and only if $\RRR(r)$ contains a record $(\emptyset,c)$
  with $c \leq k$, where $r$ is the root of $(T,\chi)$. It hence
  suffices to show how to compute the set of records for the four
  different types of nodes of a nice tree-decomposition.

  Let $l$ be a leaf node of $T$ with $\chi(l)=\{v\}$. Then
  $\RRR(l):=\{(\{\{v\}\},1)\}$. Note the $\RRR(l)$ can be computed in
  constant time.

  Let $t$ be an introduce node of $T$ with child $t'$ and
  $\chi(t)=\chi(t')\cup \{v\}$. Then $\RRR(t)$ can be obtained from
  $\RRR(t')$ as follows. For every $(\PPP',c') \in
  \RRR(t')$ and every $P' \in \PPP'$ such that $G[P'\cup\{v\}]$ is a
  clique, we add the record $((\PPP'\setminus \{P'\})\cup
  \{P'\cup\{v\}\},c')$ to $\RRR(t)$. Moreover, for every $(\PPP',c') \in
  \RRR(t')$, we add the record $(\PPP'\cup \{\{v\}\},c'+1)$ to $\RRR(t)$.
  Note that $\RRR(t)$ can be computed in time $\bigO{|\RRR(t')|\omega^2}$.

  Let $t$ be a forget node of $T$ with child $t'$ and
  $\chi(t)  \cup \{v\}=\chi(t')$. Then $\RRR(t)$ consists of all
  records $(\PPP,c)$ such that $c$ is the minimum integer such that
  there is a record $(\PPP',c) \in \RRR(t')$ and a set $P' \in \PPP'$
  with $v \in P'$ and $(\PPP'\setminus P')\cup (P'\setminus
  \{v\})=\PPP$; if no such record exists $(\PPP,c)$ is not in $\RRR(t)$.
  Note that $\RRR(t)$ can be computed in time
  $\bigO{|\RRR(t')|\omega^2}$.

  Let $t$ be a join node with children $t_1$ and $t_2$. Then $\RRR(t)$
  contains all records $(\PPP,c)$ such that there are integers $c_1$
  and $c_2$ with $c_1+c_2-|\PPP|=c$ and $(\PPP,c_1) \in \RRR(t_1)$ and
  $(\PPP,c_2) \in \RRR(t_2)$. Note that $\RRR(t)$ can be computed in
  time $\bigO{(|\RRR(t_1)|+|\RRR(t_2)|)\omega}$ (assuming that the records are
  kept in an ordered manner).

  The total run-time of the algorithm is then the number of nodes of
  $T$, i.e., $\bigO{\omega |V(G)|}$, times the maximum time required
  at any of the four types of nodes, i.e., $\bigO{|\RRR(t)|\omega^2}$,
  which because $|\RRR(t)| \leq \omega!$ is at most
  $\bigO{\omega!\omega^3|V(G)|}$.
  \comment{
  Let $\III=(G,k)$ be an instance of \PIC{}. We will show the lemma
  with the help of Proposition~\ref{prop:MSO}\nb{Iyad: Missing proposition.}, \ie, by providing an
  \MSO{} formula $\Phi(X)$ (of constant length) with a free
  variable $X$ such that
  $\III$ has a solution if and only if there is a set $X$ of at most
  $k$ vertices of $G$ satisfying $G \models \Phi(X)$.

  The main idea behind the construction of $\Phi(X)$ is to
  guess a set $Y$ of edges of $G$ satisfying:
  \begin{itemize}
  \item[(P1)] every component $C$ of $G[Y]$ is a clique; and
  \item[(P2)] $X$ contains at least one vertex from every component of $G[Y]$.
  \end{itemize}
  Clearly, $G$ has a partitioning into $k$ cliques if and only if
  there is a set $X$ with $|X|\leq k$ such that there exists a subset
  $Y$ of $E(G)$ satisfying (P1) and (P2). It remains to show
  how to enforce (P1) and (P2) using $\Phi(X)$.

  Informally, (P1) can be enforced by ensuring that if two distinct vertices
  $v,v' \in V(G)$ are connected in $G[Y]$ then $vv' \in Y$. \nb{Iyad: I think it would be better to use $vv'$ to denote the edge. I changed it as such.}
  \nb{Agreed - we should do that. -R+S}
  This can be enforced by the formula $\textfor{FP1}$ defined as:
  \begin{align*}
    \textfor{FP1}(Y) := & \forall v \forall v' (V v \land V v' \land v\neq v' \land
                      \textfor{conn}(v,v',Y)) \rightarrow \\
                       & (\exists y I yv \land I yv')
  \end{align*}
  Here the formula $\textfor{conn}(Y,v,v')$ is defined as follows:
  \begin{align*}
    \textfor{conn}(v,v',Y) := & \forall W (W v \land \lnot (W v')) \rightarrow \\
                              & (\exists x \exists y \exists e W x
                                \land V y \land \lnot W y \land Y e  \\
                              & \land I ex \land I ey)
  \end{align*}

  Provided that (P1) holds for $Y$, (P2) can now be enforced by ensuring that
  for every vertex $v \in V(G)$ either $v \in X$ or there is a edge
  $vu$ in $Y$ such that $u \in X$.
  The following formula, denoted by $\textfor{FP2}(Y,X)$, ensures (P2)
  using this idea:
  \begin{align*}
    \textfor{FP2}(Y,X) := & \forall v (V v \rightarrow X v \lor
                            (\exists u \exists e I eu \land Iev \land
                            X u))
  \end{align*}
  Putting everything together we obtain the
  formula $\Phi(X)$ as:
  \begin{align*}
    \Phi(X) := & \exists Y \textfor{FP1}(Y) \land \textfor{FP2}(Y,X)
  \end{align*}
  }
\end{proof}
}
Note that the above theorem also implies that the well-known
\textsc{Coloring} problem is \FPT{} parameterized
by the treewidth of the complement of the input graph.
The theorem below follows immediately from Lemmas~\ref{lem:drm-cc}
and~\ref{lem:cc-tw}.
\begin{THE}\label{the:drm-tw}
  \uDRM{} and \bDRM{} are in \FPT{} when parameterized by the treewidth of
  the compatibility graph.
\end{THE}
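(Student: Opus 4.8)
The plan is to obtain the result by directly composing the two previously established results, namely the equivalence of Lemma~\ref{lem:drm-cc} and the tractability of Theorem~\ref{lem:cc-tw}. Given an instance $\III=(p,\matM,t)$ of \uDRM{} (the argument for \bDRM{} is identical, since a fixed prime $p$ is merely a special case of $p$ being part of the input), I would first build the compatibility graph $\comG(\matM)$ in polynomial time by comparing every pair of rows entrywise. Lemma~\ref{lem:drm-cc} then tells us that $\III$ is a yes-instance if and only if the \PIC{} instance $\III'=(\comG(\matM),t)$ is, and moreover that a clique cover of $\comG(\matM)$ of size at most $t$ can be turned into a consistent matrix with at most $t$ distinct rows in polynomial time, and vice versa. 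Hence it suffices to solve $\III'$ and then translate the resulting solution back.

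The crucial point to verify, and the only place where the parameter matters, is that the reduction preserves the parameter \emph{exactly}: the graph handed to the \PIC{} solver is precisely the compatibility graph $\comG(\matM)$, so its treewidth equals the parameter $\tw(\comG(\matM))$ of the original \uDRM{} instance with no blow-up whatsoever. Consequently, invoking Theorem~\ref{lem:cc-tw} on $\III'$ runs in time $f(\tw(\comG(\matM)))\cdot \size{\III'}^{\bigO{1}}$ for some computable function $f$, which is \FPT{}-time in the parameter of the \uDRM{} instance. Since both the forward reduction (building $\comG(\matM)$) and the backward translation of a solution (filling in the $\bullet$ entries as in the proof of Lemma~\ref{lem:drm-cc}) run in polynomial time, the overall algorithm is fixed-parameter tractable.

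In truth there is no genuine obstacle left at this stage: all of the combinatorial and algorithmic work has already been carried out in Lemma~\ref{lem:drm-cc} and Theorem~\ref{lem:cc-tw}, so the theorem follows immediately. The only thing one must be slightly careful about is that both \uDRM{} and \PIC{} are stated as search problems rather than decision problems; this causes no difficulty, because the dynamic-programming algorithm underlying Theorem~\ref{lem:cc-tw} is constructive (it produces an actual clique partition), and Lemma~\ref{lem:drm-cc} guarantees a polynomial-time procedure to convert that partition into a completed matrix $\matM'$ with $\ur(\matM')\leq t$. Composing these constructive steps yields not just a decision but an explicit solution within the claimed running time.
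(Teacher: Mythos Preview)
Your proposal is correct and matches the paper's own argument essentially verbatim: the paper simply states that the theorem ``follows immediately from Lemma~\ref{lem:drm-cc} and Theorem~\ref{lem:cc-tw}.'' Your additional remarks about parameter preservation and constructiveness of the underlying dynamic program are accurate elaborations of this one-line justification.
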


\subsection{\bDRM}

\begin{THE}\label{thm:bounded-drm-fpt}
  \bDRM\emph{[\comb{}]} is in \FPT{}.
\end{THE}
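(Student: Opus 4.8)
The plan is to route through the reduction to \textsc{Clique Cover} from Lemma~\ref{lem:drm-cc} and then invoke the treewidth-based algorithm of Theorem~\ref{lem:cc-tw}. The obstacle is that one cannot apply Theorem~\ref{lem:cc-tw} to $\comG(\matM)$ directly: any set of pairwise-compatible rows induces a clique, and there can be arbitrarily many mutually compatible rows, so $\comG(\matM)$ may have unbounded treewidth even when \comb{} is constant. I would therefore first contract $\comG(\matM)$ by merging \emph{true twins}, and then show the contracted graph has treewidth bounded by a function of \comb{} and the fixed domain size $p$.

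Concretely, first I would use Proposition~\ref{pro:compcomb} to obtain a set $R$ of rows and a set $C$ of columns with $|R|+|C|=k=\comb{}$ that together cover all $\bullet$ entries. Call a row a \emph{base row} if it is not in $R$; by the covering property every $\bullet$ of a base row lies in a column of $C$, so each base row is fully determined on the columns outside $C$. Let $\comG'$ be obtained from $\comG(\matM)$ by merging each maximal class of identical incomplete rows into a single vertex (adjacency between merged vertices being inherited, which is well defined since within a class the rows are identical). A standard exchange argument shows that merging does not change the minimum clique cover number: in any optimal clique partition identical rows may be assumed to lie in a common clique, so a partition of $\comG'$ lifts to one of $\comG(\matM)$ of the same size and vice versa, both in polynomial time.

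The key structural claim is that $\tw(\comG') \le (p+1)^{|C|} + |R| - 1$. Let $B'$ be the set of vertices of $\comG'$ whose merged class contains no row of $R$; since the classes are pairwise disjoint and $|R|\le k$, at most $|R|$ vertices lie outside $B'$. Every vertex of $B'$ represents base rows only. Two base rows are compatible iff they agree on all columns outside $C$ (where both are determined) and are compatible on $C$; hence base rows with different restrictions to the columns outside $C$ lie in different connected components of $\comG'[B']$, and the vertices inside a single component correspond to distinct partial vectors in $(\GF(p)\cup\{\bullet\})^{|C|}$, of which there are at most $(p+1)^{|C|}$. Thus $\comG'[B']$ is a disjoint union of graphs each on at most $(p+1)^{|C|}$ vertices, giving $\tw(\comG'[B'])\le (p+1)^{|C|}-1$. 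Adding the at most $|R|$ remaining vertices to every bag of an optimal tree-decomposition of $\comG'[B']$ raises the width by at most $|R|$, which yields the claimed bound; since $p$ is fixed and $|C|,|R|\le k$, this is bounded by a function of $k$.

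Finally I would compute a tree-decomposition of $\comG'$ of width at most $(p+1)^{|C|}+|R|-1$ in \FPT{} time, run the \textsc{Clique Cover} dynamic program of Theorem~\ref{lem:cc-tw} with budget $t$, and translate the resulting clique partition back into a completed matrix through Lemma~\ref{lem:drm-cc}. The main obstacle is precisely the treewidth blow-up from large cliques; once it is neutralized by contracting true twins and observing the per-base-pattern decomposition of $\comG'[B']$, the remaining steps are routine applications of the earlier results.
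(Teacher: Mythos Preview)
Your proof is correct and follows essentially the same route as the paper: remove the at most $|R|$ covering rows, observe that the remaining base rows split into components of size at most $(p+1)^{|C|}$ (since compatible base rows must agree outside $C$ and there are only $(p+1)^{|C|}$ possible restrictions to $C$), conclude a treewidth bound of roughly $(p+1)^k+k$, and invoke Theorem~\ref{the:drm-tw}. The only difference is that your true-twin contraction step is unnecessary: the paper states in Section~\ref{sub:problems} the standing assumption that the rows of the input matrix are pairwise distinct, so your claim that ``$\comG(\matM)$ may have unbounded treewidth'' is false under this assumption, and the paper bounds $\tw(\comG(\matM))$ directly without any merging.
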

\begin{proof}
  Let $(\matM, t)$ be an instance of \bDRM{}, and let $k$ be the
  parameter $\comb$.
 By Proposition~\ref{pro:compcomb}, we can
  compute a set $R_{\bullet}$ of rows and a set $C_{\bullet}$ of columns, where
  $|R_{\bullet}\cup C_{\bullet}|\leq k$, and such that every occurrence of $\bullet$ in
  $\matM$ is either contained in a row or column in $R_{\bullet}\cup
  C_{\bullet}$. Let $R$ and $C$ be the set of rows
  and columns of $\matM$, respectively. Let $\PPP$ be the unique partition
  of $R \setminus R_{\bullet}$ such that two rows $r$ and $r'$ belong to the
  same set in $\PPP$ if and only if they are identical on all columns
  in $C \setminus C_{\bullet}$. Then $|P|\leq (p+1)^k$, for every $P \in
  \PPP$, since two rows in $P$ can differ on at most $|C_{\bullet}|\leq k$
  entries, each having $(p+1)$ values to be chosen from. Moreover, any two
  rows in $R \setminus R_{\bullet}$ that are not contained in the same set
  in $\PPP$ are not compatible, which implies that they appear in different components of $\comG(\matM)\setminus R_{\bullet}$
  and hence the set of vertices in
  every component of $\comG(\matM)\setminus R_{\bullet}$ is a subset of $P$, for some $P \in
  \PPP$. It is now straightforward to show that $\tw(\comG(\matM))\leq
  k+(p+1)^k$, and hence,  $\tw(\comG(\matM))$ is bounded by a function of the parameter $k$.
  \lv{Towards showing this consider the tree-decomposition
    $(T,\chi)$ for $\comG(\matM)$, where $T$ is a path containing one
    node $t_P$ with $\chi(t_P)=R_{\bullet}\cup P$ for every $P \in \PPP$. Then
    $(T,\chi)$ is tree-decomposition of width $k+(p+1)^k-1$ for $\comG(\matM)$.}
  The theorem now follows from Theorem~\ref{the:drm-tw}.
\end{proof}

\subsection{\uDRM}

The proof of the following theorem is very similar to the proof of
Theorem~\ref{thm:bounded-drm-fpt}, i.e., we mainly use the observation
that the parameter \row\ is also a bound on the treewidth
of the compatibility graph and then apply Theorem~\ref{the:drm-tw}.
\begin{THE}\label{thm:unbounded-drm-fpt}
  \uDRM\emph{[\row{}]} is in \FPT{}.
\end{THE}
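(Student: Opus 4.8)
The plan is to mirror the proof of Theorem~\ref{thm:bounded-drm-fpt}: rather than solving \uDRM{}[\row{}] directly, I would show that the parameter \row{} bounds the treewidth of the compatibility graph $\comG(\matM)$ from above, and then invoke Theorem~\ref{the:drm-tw}, which already provides an algorithm running in \FPT{}-time with respect to $\tw(\comG(\matM))$ (including the computation of a suitable tree-decomposition). Let $(p,\matM,t)$ be the input instance and set $k=\row$. First I would compute a set $R_{\bullet}$ of at most $k$ rows covering every occurrence of $\bullet$ in $\matM$; as noted before Proposition~\ref{pro:compcomb}, computing \row{} together with such a covering set is trivial and costs only polynomial time.

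The crucial structural observation is that every row outside $R_{\bullet}$ is fully determined, i.e., contains no $\bullet$, since $R_{\bullet}$ covers all occurrences of $\bullet$. Two fully determined rows are compatible if and only if they are identical; but by the standing assumption that the rows of the input matrix are pairwise distinct, no two such rows are identical, and hence no two of them are compatible. Consequently, if $R$ denotes the set of all rows, the vertices corresponding to $R\setminus R_{\bullet}$ form an independent set in $\comG(\matM)$; equivalently, $\comG(\matM)$ minus the at most $k$ vertices of $R_{\bullet}$ is edgeless.

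From here the treewidth bound is immediate. I would exhibit an explicit tree-decomposition $(T,\chi)$ of $\comG(\matM)$ consisting of a central node $t_0$ with $\chi(t_0)=R_{\bullet}$ together with, for each determined row $d\in R\setminus R_{\bullet}$, a leaf $t_d$ attached to $t_0$ with $\chi(t_d)=R_{\bullet}\cup\{d\}$. Every edge of $\comG(\matM)$ either joins two vertices of $R_{\bullet}$, and is then covered by $\chi(t_0)$, or joins a vertex of $R_{\bullet}$ to a determined row $d$, and is then covered by $\chi(t_d)$; no edge joins two determined rows. Each vertex of $R_{\bullet}$ lies in every bag, and each $d$ lies only in $\chi(t_d)$, so conditions (TD1) and (TD2) both hold. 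The largest bag has size $|R_{\bullet}|+1\leq k+1$, so $\tw(\comG(\matM))\leq k=\row$, and applying Theorem~\ref{the:drm-tw} yields an algorithm for \uDRM{} running in time $f(k)\cdot n^{\bigO{1}}$, establishing membership in \FPT{}.

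I do not expect a genuine obstacle in this argument; the only point that requires care is the reliance on the pairwise-distinctness assumption on the rows, which is exactly what guarantees that $R\setminus R_{\bullet}$ is an independent set. Without it, identical determined rows could form arbitrarily large cliques and the treewidth could be unbounded, so this is the load-bearing step. It is also worth emphasizing the contrast with Theorem~\ref{thm:bounded-drm-fpt}: there one must partition the determined rows according to their values on the non-covered columns because the bounded domain allows many mutually compatible determined rows, whereas in the unbounded setting distinct determined rows are simply never compatible, so no such partition is needed.
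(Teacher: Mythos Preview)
Your proposal is correct and follows essentially the same approach as the paper: both show that removing the at most $k$ covering rows leaves an independent set in $\comG(\matM)$ (relying on the pairwise-distinctness assumption), bound $\tw(\comG(\matM))\leq k$ via an explicit decomposition, and invoke Theorem~\ref{the:drm-tw}. The only cosmetic difference is that the paper uses a path decomposition while you use a star; the width bound is identical.
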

\lv{\begin{proof}
  Let $(p,\matM, t)$ be an instance of \uDRM{}, let $k$ be the
  parameter $\row$ and let $R_{?}$ be a set of rows with $|R_{?}|\leq
  k$ covering all occurrences of $\bullet$ in $\matM$.
  Then $\comG(\matM)\setminus R_{?}$ is an independent
  set since any two distinct rows without $\bullet$ are not compatible. It
  is now straightforward to show that $\tw(\comG(\matM))\leq k$ and
  hence bounded by a function of our parameter $k$.
  Towards showing this the following tree-decomposition
  $(T,\chi)$ for $\comG(\matM)$, where $T$ is a path containing one
  node $t_r$ with $\chi(t_r)=R_{?}\cup \{r\}$ for every $r \in R
  \setminus R_{?}$. Then
  $(T,\chi)$ is tree-decomposition of width $k$ for $\comG(\matM)$.
  The theorem now follows from Theorem~\ref{the:drm-tw}.
\end{proof}
}

\sv{For the remainder of this section, we will introduce the
\textsc{Partitioning Into Triangles} (\PIT{}) problem: Given a
graph $G$, decide whether there is a partition $\PPP$ of $V(G)$ into triangles.}
\lv{
For our remaining hardness proofs we will make use of the following problem.
\pbDef{\textsc{Partitioning Into Triangles} (\PIT{})}{A graph $G$.}{Is
  there a partition $\PPP$ of $V(G)$ into triangles, i.e., $G[P]$ is a
  triangle for every $P \in \PPP$?}
}
We will often use the following easy observation.
\begin{OBS}\label{obs:pit-pic}
  A graph $G$ that does not contain a clique with four vertices
  has a partition into triangles if and only if it has a
  partition into at most $|V(G)|/3$ cliques.
\end{OBS}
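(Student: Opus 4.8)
The plan is to establish both directions of the equivalence via an elementary counting argument, where the only nontrivial use of the hypothesis is that, since $G$ contains no clique on four vertices, every clique of $G$ has at most three vertices. Throughout, write $n = |V(G)|$.

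For the forward direction I would simply note that a partition of $V(G)$ into triangles is in particular a partition into cliques, each consisting of exactly three vertices. Hence, whenever such a partition exists, $3$ divides $n$ and the number of parts equals $n/3$, which is trivially at most $n/3$; so one direction is immediate and does not even require the $K_4$-freeness.

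The converse is the substantive (though still short) direction. Suppose $\PPP$ is a partition of $V(G)$ into at most $n/3$ cliques. Because $G$ has no clique on four vertices, every part $P \in \PPP$ satisfies $|P| \le 3$. Summing the sizes of the parts and using $|\PPP| \le n/3$ yields
\[
n = \sum_{P \in \PPP} |P| \;\le\; 3\,|\PPP| \;\le\; 3 \cdot \frac{n}{3} \;=\; n .
\]
Every inequality in this chain is therefore an equality; in particular $\sum_{P \in \PPP} |P| = 3\,|\PPP|$ together with $|P| \le 3$ for each part forces $|P| = 3$ for every $P \in \PPP$. Thus each part is a clique on exactly three vertices, i.e.\ a triangle, and $\PPP$ is the sought partition of $V(G)$ into triangles.

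I do not anticipate any real obstacle here, as the whole statement reduces to a one-line averaging argument. The only place where the hypothesis enters is the bound $|P| \le 3$ in the converse: it is precisely the absence of a $K_4$ that collapses the counting inequality into a chain of equalities and thereby upgrades an arbitrary clique partition of size at most $n/3$ into a partition into triangles.
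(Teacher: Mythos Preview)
Your proof is correct. The paper does not give a proof of this observation at all; it simply labels it as an easy observation, and your counting argument is exactly the natural justification one would supply.
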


\begin{THE}\label{thm:unbounded-drm-np}
  \uDRM\emph{[\col{}]} is \paraNP\hy hard.
\end{THE}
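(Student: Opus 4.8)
The plan is to invoke the \paraNP-hardness characterization recalled in the preliminaries: it suffices to show that \uDRM{} remains \NP-hard when \col{} is bounded by a fixed constant. I would reduce from \PIT{}, which is \NP-complete even on $K_4$-free graphs, so that we may assume the input graph $G$ is $K_4$-free with $|V(G)|=3q$. Given such a $G$, the goal is to build in polynomial time an incomplete matrix $\matM$ over a field $\GF(p)$ (with $p$ polynomially bounded in $|V(G)|$) whose compatibility graph $\comG(\matM)$ is isomorphic to $G$ and in which every $\bullet$ lies in one of only a constant number of columns. Setting $t=q$ and appealing to Lemma~\ref{lem:drm-cc} together with Observation~\ref{obs:pit-pic}, the instance $(p,\matM,t)$ is a yes-instance of \uDRM{} if and only if $\comG(\matM)=G$ admits a partition into cliques of size at most $q$, which (using $K_4$-freeness) holds if and only if $G$ admits a partition into triangles. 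Since \col$(\matM)$ equals the fixed number of columns carrying $\bullet$, this yields \NP-hardness for a constant value of \col{}, and hence \paraNP-hardness.

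The core of the reduction is therefore the realization of $G$ as a compatibility graph that concentrates all missing entries into $O(1)$ columns. I place one row per vertex. The mechanism I would exploit is that, in any fixed column, two rows become incompatible exactly when both are determined there and carry different values, whereas a row with $\bullet$ in that column imposes no constraint; thus the non-edges contributed by a single column form a complete multipartite pattern on its determined value-classes, with the $\bullet$-rows acting as a free, unconstrained set. Consequently, encoding all non-edges of $G$ with a bounded number of such columns amounts to expressing $\overline{G}$ as a union of a constant number of complete multipartite graphs (with free sets), which I would engineer by choosing a suitably structured \NP-hard source --- for instance, reducing from three-dimensional matching or a restricted form of \PIT{} whose complement admits such a bounded decomposition --- and realizing each part by one column over the large domain $\GF(p)$. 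Correctness of the realization then reduces to checking that, for every edge, the two endpoints agree (or carry $\bullet$) on each of these columns, while for every non-edge some column separates them.

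The step I expect to be the main obstacle is precisely keeping the number of $\bullet$-columns constant while retaining \NP-hardness. The difficulty is structural: since every fully determined column forces any two compatible rows to coincide on it, the compatibility graph decomposes as a disjoint union over ``determined-signature groups'', and therefore \emph{all} non-edges inside a single connected component of $G$ must be carved out by the few columns that carry $\bullet$. This rules out the naive strategy of separating each non-edge by its own column, and forces the source instances to be ones whose complement has bounded complete-multipartite cover. Designing the reduction (or the accompanying gadget rows and columns) so that this bounded decomposition holds while the triangle-partition answer is preserved is where the real work lies, and it is the part I would expect to require the most care.
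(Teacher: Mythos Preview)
Your high-level plan is sound and matches the paper's route: reduce to \PIC{} via Lemma~\ref{lem:drm-cc}, use Observation~\ref{obs:pit-pic} on a $K_4$-free source to get equivalence with \PIT{}, and realize the desired graph as a compatibility graph with only constantly many ``active'' columns. You have also correctly put your finger on the crux, namely that the realization step is where the real work lies. However, the proposal stops precisely at that step; as written it is a plan, not a proof, and the missing construction is the entire content of the theorem.

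Two concrete remarks on how the paper closes the gap, which also simplify the picture you have set up. First, the paper does not try to realize an \emph{arbitrary} $K_4$-free \PIT{} instance; it reduces from \TSATT{} and builds a very specific gadget graph (variable bow-ties, clause fans, and auxiliary $g$-vertices) whose local structure is so constrained that seven columns suffice. Attempting to cover $\overline{G}$ by $O(1)$ complete multipartite graphs for a generic $K_4$-free $G$ is hopeless; the structured source is essential, and your proposal leaves the choice of that source open. Second, the paper sidesteps your ``determined-signature decomposition'' concern by bounding the \emph{total} number of columns to seven rather than allowing many fully determined columns plus a few $\bullet$-columns. With only seven columns, different rows can carry $\bullet$ in different positions, so the compatibility relation is not forced to split into rigid signature blocks; this is exactly what lets the variable, clause, and auxiliary rows interact as needed. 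In short: your framework is right, but the theorem lives in the explicit seven-column encoding of the \TSATT{}-derived gadget graph, which your proposal does not supply.
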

\lv{\begin{proof}}
\sv{\begin{proof}[Proof Sketch.]}
  \sv{We will reduce from the \NP-complete \TSATT\ problem~\cite{BermanKarpinskiScott03j}: Given a
    propositional formula $\phi$ in conjunctive normal form such that
    (1) every clause of $\phi$ has exactly three distinct literals and
    (2) every literal occurs in exactly two clauses, decide whether
    $\phi$ is satisfiable.}
  \lv{
  We will reduce from the following variant of $3$-SAT, which is
  \NP\hy complete~\cite{BermanKarpinskiScott03j}.

  \pbDef{\textsc{$3$-Satisfiability-$2$} (\TSATT{})}{A propositional
    formula $\phi$ in conjunctive normal form such that (1) every clause
    of $\phi$ has exactly three distinct literals and (2) every literal occurs in exactly two clauses.}{Is $\phi$ satisfiable?}
  }
  To make our reduction easier to follow, we will divide the reduction
  into two steps. Given an instance (formula) $\phi$ of \TSATT{}, we will first
  construct an equivalent instance $G$ of \PIT{} with the additional
  property that $G$ does not contain a clique on four vertices. We note
  that similar reductions from variants of the satisfiability
  problem to \PIT{} are known (and hence our first step does not show
  anything new for \PIT{}); however, our reduction is specifically designed to
  simplify the second step, in which we will
  construct an instance $(\matM,|V(G)|/3)$ of \uDRM{} such that
  $\comG(\matM)$ is isomorphic to $G$ and $\matM$ has only seven
  columns. By Observation~\ref{obs:pit-pic} and Lemma~\ref{lem:drm-cc}, this proves the theorem
  since $(\matM,|V(G)|/3)$ has a solution if and only if $\phi$ does.

  Let $\phi$ be an instance of \TSATT{} with variables
  $x_1,\dotsc,x_n$ and clauses $C_1,\dotsc,C_m$. We first
  construct the instance $G$ of \PIT{} such that $G$ does not contain
  a clique of size four. \sv{For every variable $x_i$ of $\phi$, let $G(x_i)$ be the graph
    illustrated in Figure~\ref{fig:udrm-hard-gxi}, and for every clause
    $C_j$ of $\phi$, let $G(C_j)$ be the graph illustrated in
    Figure~\ref{fig:udrm-hard-gcj}.
    }\lv{For every variable $x_i$ of $\phi$, let $G(x_i)$ be the
  graph with vertices $x_i^1$, $x_i^2$, $\bar{x}_i^1$, $\bar{x}_i^2$,
  $x_i$ and edges forming a triangle on the vertices $x_i^1$, $x_i^2$, and $x_i$ as
  well as a triangle on the vertices $\bar{x}_i^1$, $\bar{x}_i^2$, and
  $x_i$. Moreover, for every clause $C_j$ with literals $l_{j,1}$,
  $l_{j,2}$, and $l_{j,3}$, let $G(C_j)$ be the graph with vertices
  $l_{j,1}^1$, $l_{j,1}^2$, $l_{j,2}^1$, $l_{j,2}^3$, $l_{j,3}^1$,
  $l_{j,3}^2$, $h_j^1$, and $h_j^2$ and edges between $l_{j,r}^1$ and
  $l_{j,r}^2$ for every $r \in \{1,2,3\}$ as well as edges forming a
  complete bipartite graph between $\{h_j^1,h_j^2\}$ and all other
  vertices of $G(C_j)$.
  }Let $f : [m]\times [3]\rightarrow \SB
  x_i^o,\bar{x}_i^o \SM 1 \leq i \leq n \land 1 \leq o \leq 2\SE$ be
  any bijective function such that for every $j$ and $r$ with $1 \leq
  j \leq m$ and $1\leq r \leq 3$, it holds that: If $f(j,r)=x_i^o$ (for
  some $i$ and $o$), then $x_i$ is the $r$-th literal of $C_j$; and
  if $f(j,r)=\bar{x}_i^o$, then $\bar{x}_i$ is the $r$-th
  literal of $C_j$. \lv{Figures~\ref{fig:udrm-hard-gxi}
  and~\ref{fig:udrm-hard-gcj} illustrate the gadgets $G(x_i)$ and $G(C_j)$.}

  The graph $G$ is obtained from the
  disjoint union of the graphs $G(x_1),\dotsc,G(x_n),G(C_1),\dotsc,G(C_m)$
  after applying the following modifications:
  \lv{\begin{itemize}
  \item}\sv{(1)} For every $j$ and $r$ with $1 \leq j \leq m$ and $1\leq r \leq
    3$ add edges forming a triangle on the vertices $l_{j,r}^1$,
    $l_{j,r}^2$, $f(j,r)$\lv{.
  \item}\sv{; and (2)} for every $i$ with $1\leq i \leq 2n-m$, add the vertices
    $g_i^1, g_i^2$ and an edge between $g_i^1$ and
    $g_i^2$. Finally we add edges forming a complete bipartite graph
    between all vertices in $\SB g_i^o \SM 1 \leq i \leq 2n-m \land 1 \leq o \leq 2 \SE$ and all vertices in $\SB
    h_i^o \SM 1 \leq i \leq n \land 1 \leq o \leq 2\SE$.
  \lv{\end{itemize}}

  \begin{figure}[tb]
    \centering
    \begin{tikzpicture}[node distance=0.6cm]
      \tikzstyle{every node}=[]
      \tikzstyle{gn}=[circle, inner sep=2pt,draw]
      \tikzstyle{dots}=[circle, inner sep=1pt,draw]
      \tikzstyle{every edge}=[draw, line width=1.5pt]

      \draw
      node[gn, label=left:{$x_i(i,\bullet,0,0,0,0,0)$}] (xi) {}
      node[node distance=2cm, right of=xi] (ci) {}

      node[gn, node distance=0.5cm, above of=ci, label=right:{$x_i^2(i,1,\bullet,0,\bullet,0,0)$}] (xi2) {}
      node[gn, above of=xi2, label=right:{$x_i^1(i,1,\bullet,\bullet,0,0,0)$}] (xi1) {}
      node[gn, node distance=0.5cm, below of=ci, label=right:{$\bar{x}_i^2(i,0,\bullet,0,\bullet,0,0)$}] (bxi2) {}
      node[gn, below of=bxi2, label=right:{$\bar{x}_i^1(i,0,\bullet,\bullet,0,0,0)$}] (bxi1) {}

      (xi) edge (xi1)
      (xi) edge (xi2)
      (xi) edge (bxi1)
      (xi) edge (bxi2)
      (xi1) edge (xi2)
      (bxi1) edge (bxi2)
      ;
    \end{tikzpicture}
    \vspace{-0.1cm}\caption{An illustration of the gadget $G(x_i)$ introduced in the reduction of Theorem~\ref{thm:unbounded-drm-np}. The label of each vertex $v$ indicates the row vector $R(v)$.\vspace{-0.1cm}}
    \label{fig:udrm-hard-gxi}
  \end{figure}
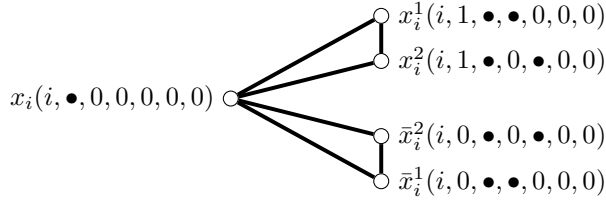

    \begin{figure}[tb]
    \centering
    \begin{tikzpicture}[node distance=0.6cm]
      \tikzstyle{every node}=[]
      \tikzstyle{gn}=[circle, inner sep=2pt,draw]
      \tikzstyle{dots}=[circle, inner sep=1pt,draw]
      \tikzstyle{every edge}=[draw, line width=1.5pt]

      \draw
      node[gn, label=left:{$h_j^1(\bullet,\bullet,j,1,1,1,\bullet)$}] (hj1) {}
      node[gn, below of=hj1, label=left:{$h_j^2(\bullet,\bullet,j,1,1,2,\bullet)$}] (hj2) {}

      node[gn, node distance=2cm, right of=hj1, label=right:{$l_{j,2}^1(5,1,j,\bullet,1,\bullet,0)$}] (lj21) {}
      node[gn, node distance=2cm, right of=hj2, label=right:{$l_{j,2}^2(5,1,j,\bullet,1,\bullet,0)$}] (lj22) {}

      node[gn, above of=lj21, label=right:{$l_{j,1}^2(4,1,j,1,\bullet,\bullet,0)$}] (lj12) {}
      node[gn, above of=lj12, label=right:{$l_{j,1}^1(4,1,j,1,\bullet,\bullet,0)$}] (lj11) {}

      node[gn, below of=lj22, label=right:{$l_{j,3}^1(6,0,j,1,\bullet,\bullet,0)$}] (lj31) {}
      node[gn, below of=lj31, label=right:{$l_{j,3}^2(6,0,j,1,\bullet,\bullet,0)$}] (lj32) {}

      (lj11) edge (lj12)
      (lj21) edge (lj22)
      (lj31) edge (lj32)

      (hj1) edge (lj11)
      (hj1) edge (lj12)
      (hj1) edge (lj21)
      (hj1) edge (lj22)
      (hj1) edge (lj31)
      (hj1) edge (lj32)

      (hj2) edge (lj11)
      (hj2) edge (lj12)
      (hj2) edge (lj21)
      (hj2) edge (lj22)
      (hj2) edge (lj31)
      (hj2) edge (lj32)

      ;
    \end{tikzpicture}
    \vspace{-0.1cm}\caption{An illustration of the gadget $G(C_j)$ introduced in the reduction of Theorem~\ref{thm:unbounded-drm-np}. The label of each vertex $v$ indicates the row vector $R(v)$; here we assume that $f(j,1)=x_4^1$, $f(j,2)=x_5^2$, and $f(j,3)=\bar{x}_6^1$.\vspace{-0.1cm}}
    \label{fig:udrm-hard-gcj}
  \end{figure}
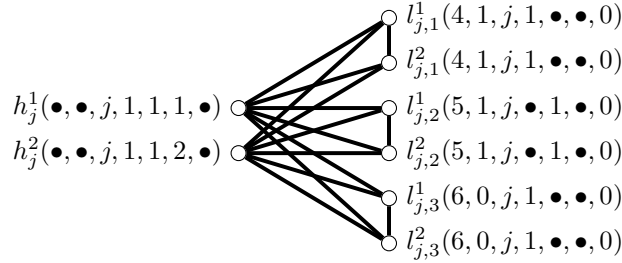
  This completes the construction of $G$. The following claim
  concludes the first step of our reduction.
  \begin{CLM}
    $\phi$ is satisfiable if and only if $G$ has a partition into
    triangles. Moreover, $G$ does not contain a clique of size four.
  \end{CLM}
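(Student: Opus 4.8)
**

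The plan is to prove the two parts of the claim separately, starting with the easier structural fact that $G$ contains no $K_4$, and then establishing the equivalence between satisfiability of $\phi$ and the existence of a triangle partition of $G$.

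For the $K_4$-freeness, I would argue locally by examining the possible edges incident to any candidate clique. The graph $G$ is assembled from the variable gadgets $G(x_i)$, the clause gadgets $G(C_j)$, and the auxiliary $g_i^o$ vertices, connected only through the specified triangles (via $f$) and the complete bipartite joins between the $\{g_i^o\}$ and $\{h_j^o\}$ sets. I would check each vertex type: the literal-copy vertices $l_{j,r}^s$ sit in a triangle with $f(j,r)$ and are adjacent to $h_j^1, h_j^2$, but $h_j^1$ and $h_j^2$ are \emph{not} adjacent to each other (they lie in a complete bipartite, not complete, relationship with the $l$-vertices), so no four mutually adjacent vertices arise there. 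The variable-gadget vertices $x_i^o, \bar x_i^o, x_i$ form two triangles sharing only $x_i$, so the largest clique among them has size three. The bipartite joins between $\{g_i^o\}$ and $\{h_j^o\}$ contribute only edges across the parts with a single edge inside each pair, again precluding $K_4$. Enumerating these cases carefully is tedious but routine.

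For the equivalence, I would give explicit constructions in both directions. Given a satisfying assignment, for each variable $x_i$ I would use the truth value to decide which of its two triangles ($\{x_i, x_i^1, x_i^2\}$ or $\{x_i, \bar x_i^1, \bar x_i^2\}$) is used to cover $x_i$; the literal copies corresponding to the satisfying literal are then consumed by clause gadgets. In each clause $C_j$, since at least one literal is true, at least one of the three pairs $(l_{j,r}^1, l_{j,r}^2)$ can be merged with its $f(j,r)$ partner into a triangle, freeing up the $h_j^1, h_j^2$ vertices; these, together with the remaining literal-copy pairs, must then be partitioned into triangles using the $g_i^o$ vertices through the bipartite join. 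The count $2n-m$ of $g$-pairs is chosen precisely so that the bookkeeping of leftover $h$- and $l$-vertices balances. Conversely, a triangle partition of $G$ forces, in each variable gadget, a consistent choice of which triangle covers $x_i$ (yielding a truth assignment), and the structure of each clause gadget forces at least one literal triangle to be ``activated,'' i.e. at least one literal of $C_j$ is set true, giving satisfaction of $\phi$.

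The main obstacle will be the careful counting argument in the clause gadgets and the global bipartite join: I must verify that the $h_j^o$ and $g_i^o$ vertices can always be exhausted into triangles exactly when each clause has a satisfied literal, with no vertex left uncovered and no triangle straddling gadgets in an unintended way. This requires pinning down, for each clause, how many of its three literal-pairs are ``internally'' triangulated (via $f(j,r)$) versus ``externally'' paired with an $h_j^o$, and then checking that the surplus $h$-vertices match the available $g$-vertices through the complete bipartite graph. I would isolate this as the crux and handle it by a parity/counting lemma showing the leftover vertices always admit a perfect triangle cover precisely under the satisfiability condition.
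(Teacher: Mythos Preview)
Your plan is correct and matches the paper's approach: a local case analysis (checking that no vertex has a triangle in its neighbourhood) for $K_4$-freeness, an explicit partition built from a satisfying assignment (variable triangle, then satisfied-literal triangles via $f$, then unsatisfied-literal pairs with the local $h_j^o$, then the $2n-m$ surplus $h$-vertices with the $g$-pairs), and the converse via the pigeonhole observation that each clause gadget has only two $h$-vertices to cover three literal-pairs. One small slip to fix when you write it out: the ``remaining literal-copy pairs'' in a clause are absorbed by the $h_j^o$ vertices, not by the $g$-vertices (which are nonadjacent to all $l$-vertices)---your final paragraph already has this straight, so just carry that correction back into the forward-direction construction.
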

\lv{\begin{proof}
      We first show that $G$ does
      not contain a clique of size four by showing that the neighborhood of any
      vertex in $G$ does not contain a triangle.
      \begin{itemize}
      \item If $v=x_i$ for some $i$ with $1 \leq i \leq n$, then
        $N_G(v)=\{x_i^1,x_i^2,\bar{x}_i^1,\bar{x}_i^2\}$ and does not
        contain a triangle.
      \item If $v=x_i^o$ for some $i$ and $o$ with
        $1\leq i \leq n$ and $1\leq o \leq 2$, then $N_G(v)=\{x_i, x_i^{o-1\mod
          2+1},l_{j,r}^1,l_{j,r}^2\}$, where $f^{-1}(x_i^o)=(j,r)$, and
        does not contain a triangle.
      \item The case for $v=\bar{x}_i^o$ for $i$ and $o$ as above is
        analogous.
      \item If $v=l_{j,r}^o$ for some $j$, $r$, and $o$ with $1 \leq j
        \leq m$, $1 \leq r \leq 3$, and $1\leq o \leq 2$, then
        $N_G(v)=\{l_{j,r}^{o-1\mod 2+1},f(j,r),h_j^1,h_j^2\}$ and does not
        contain a triangle.
      \item If $v=h_j^o$ for some $j$ and $o$ with $1 \leq j \leq m$ and
        $1\leq o \leq 2$, then $N_G(v)=\SB l_{j,r}^o\SM 1\leq r \leq 3
        \land 1 \leq o \leq 2\SE \cup \SB g_{j'}^1,g_{j'}^2\SM 1 \leq j'
        \leq 2n-m\SE$ and does not contain a triangle.
      \item If $v=g_j^o$ for some $j$ and $o$ with $1 \leq j \leq 2n-m$ and
        $1\leq o \leq 2$, then $N_G(v)=\{g_j^{o-1\mod 2+1}\}\cup \SB h_{j'}^{o'}\SM 1\leq j' \leq m
        \land 1 \leq o' \leq 2\SE$ and does not contain a triangle.
      \end{itemize}

      We now show that $\phi$ is satisfiable
      if and only if $G$ has a partition into triangles.
      Towards showing the forward direction let $\tau$ be a satisfying
      assignment for $\phi$. Then a partition $\PPP$ of $G$ into triangles
      contains the following triangles:
      \begin{itemize}
      \item[(1)] for every $i$ with $1\leq i \leq n$ the triangle $x_i$,
        $x_i^1$, $x_i^2$ if $\tau(x_i)=0$ and the triangle $x_i$,
        $\bar{x}_i^1$, $\bar{x}_i^2$, otherwise,
      \item[(2)] for every $j$ with $1 \leq j \leq m$ and every $r$ with $1\leq
        r \leq 3$ such that the $r$-th literal of $C_j$ is satisfied by
        $\tau$, the triangle $l_{j,r}^1,l_{j,r}^2,f(j,r)$,
      \item[(3)] for every $j$ with $1 \leq j \leq m$ and every $r$ with $1\leq
        r \leq 3$ such that the $r$-th literal of $C_j$ is not satisfied by
        $\tau$, the triangle $l_{j,r}^1,l_{j,r}^2,h_j^o$, where $o \in
        \{1,2\}$ and the $r$-th literal of $C_j$ is the $o$-th literal in
        $C_j$ that is not satisfied by $\tau$; note that this is always
        possible because $C_j$ has at most two literals that are not
        satisfied by $\tau$,
      \item[(4)] Let $A$ be the subset of $\SB h_i^o\SM 1\leq i \leq n \land
        1\leq o \leq 2\SE$ containing all $h_i^o$ that are not yet part of
        a triangle, i.e., that are not part of a triangle added in
        (3). Then $|A|=2n-m$ and it is hence possible to add the following
        triangles, i.e., for every $v \in A$ a triangle containing $v$ and the two
        vertices $g_p^1$ and $g_p^2$ for some $p$ with $1 \leq p \leq
        2n-m$.
      \end{itemize}

      Towards showing the reverse direction let $\PPP$ be a partition of
      $V(G)$ into $|V(G)|/3$ triangles. Then $\PPP$ satisfies:
      \begin{itemize}
      \item[(A1)] For every $i$ with $1\leq i \leq n$, $\PPP$ either
        contains the triangle $\{x_i,x_i^1,x_i^2\}$ or the triangle
        $\{x_i,\bar{x}_i^1,\bar{x}_i^2\}$.
      \item[(A2)] For every $j$ with $1\leq j \leq m$, $\PPP$
        there is an $r$ with $1\leq r \leq 3$ such that $\PPP$ contains
        the triangle $\{l_{j,r}^1,l_{j,r}^2,f(j,r)\}$.
      \end{itemize}
      (A1) follows because these are the only two triangles in $G$
      containing $x_i$ for every $i$ with $1 \leq i \leq n$. Moreover,
      (A2) follows because for every $j$ and $r$ with $1\leq j \leq m$ and
      $1\leq r \leq 3$ there are only three triangles containing one of
      the vertices $l_{j,r}^1$ and $l_{j,r}^2$, i.e., the triangles
      $\{l_{j,r}^1,l_{j,r}^2,h_j^1\}$, $\{l_{j,r}^1,l_{j,r}^2,h_j^2\}$,
      and $\{l_{j,r}^1,l_{j,r}^2,f(j,r)\}$.
      (A2) now follows
      because $\PPP$ can contain at most two triangles containing one of
      $h_i^1$ and $h_i^2$.
      But then the assignment $\tau$ setting $\tau(x_i)=1$ for every $i$
      with $1\leq i \leq n$ if and
      only if $\PPP$ contains the triangle
      $\{x_i,\bar{x}_i^1,\bar{x}_i^2\}$ is a satisfying assignment for
      $\phi$, because of (A2).
    \end{proof}
  }
  We will now proceed to the second (and final) step of our reduction, i.e., we will
  construct an instance $(\matM,|V(G)|/3)$ of \uDRM{} such that:
  $\comG(\matM)$ is isomorphic to $G$ and $\matM$ has only seven columns.
\lv{  Because of Observation~\ref{obs:pit-pic} and
  Lemma~\ref{lem:drm-cc}, this shows that \uDRM{} is \NP\hy hard
  already for only $7$ columns and concludes the proof of the theorem.}

\sv{$\matM$ contains one row $R(u)$ for every $u \in V(G)$. The
  definition of $R(u)$ for every vertex $u$ that is part of some
  gadget $G(x_i)$ or $G(C_j)$ is illustrated in
  Figures~\ref{fig:udrm-hard-gxi} and~\ref{fig:udrm-hard-gcj}. Additionally, we set
  $R(g_j^o)=(\bullet,\bullet,\bullet,\bullet,\bullet,\bullet,j)$ for
  every $j$ and $o$ with $1 \leq j \leq 2n-m$ and $1 \leq o \leq 2$.}
\lv{$\matM$ will contain one row $R(u)$ for every $u \in V(G)$, which is
  defined as follows.
  \begin{itemize}
  \item if $u=x_i$ for some $i$ with $1 \leq i \leq n$, we set
    $R(u)=(i,\bullet,0,0,0,0,0)$,
  \item if $u=x_i^o$ for some $i$ and $o$ with $1 \leq i \leq n$ and
    $1 \leq o \leq 2$, we set either:
    \begin{itemize}
    \item $R(u)=(i,1,\bullet,\bullet,0,0,0)$, if $o=1$ or
    \item $R(u)=(i,1,\bullet,0,\bullet,0,0)$, otherwise
    \end{itemize}
  \item if $u=\bar{x}_i^o$ for some $i$ and $o$ with $1 \leq i \leq n$ and
    $1 \leq o \leq 2$, we set either:
    \begin{itemize}
    \item $R(u)=(i,0,\bullet,\bullet,0,0,0)$, if $o=1$ or
    \item $R(u)=(i,0,\bullet,0,\bullet,0,0)$, otherwise
    \end{itemize}
  \item if $u=l_{j,r}^o$ for some $j$, $r$, and $o$ with $1 \leq j
    \leq n$, $1 \leq r \leq 3$, and $1 \leq o \leq 2$, then either:
    \begin{itemize}
    \item if $x_i$ is the $r$-th literal of $C_j$ (for some $i$ with $
      1\leq i \leq n$), then either:
      \begin{itemize}
      \item if $f(j,r)=x_i^1$, we set $R(u)=(i,1,j,1,\bullet,\bullet,0)$,
      \item otherwise, i.e., if $f(j,r)=x_i^2$, we set
        $R(u)=(i,1,j,\bullet,1,\bullet,0)$,
      \end{itemize}
    \item if $\bar{x}_i$ is the $r$-th literal of $C_j$ (for some $i$ with $
      1\leq i \leq n$), then either:
      \begin{itemize}
      \item if $f(j,r)=\bar{x}_i^1$, we set $R(u)=(i,0,j,1,\bullet,\bullet,0)$,
      \item otherwise, i.e., if $f(j,r)=\bar{x}_i^2$, we set
        $R(u)=(i,0,j,\bullet,1,\bullet,0)$,
      \end{itemize}
    \end{itemize}
  \item if $u=h_j^o$ for some $j$ and $o$ with $1 \leq j \leq m$ and
    $1 \leq o \leq 2$, we set
    $R(u)=(\bullet,\bullet,j,1,1,o,\bullet)$,
  \item if $u=g_j^o$ for some $j$ and $o$ with $1 \leq j \leq 2n-m$ and
    $1 \leq o \leq 2$, we set $R(u)=(\bullet,\bullet,\bullet,\bullet,\bullet,\bullet,j)$,
  \end{itemize}
  Figures~\ref{fig:udrm-hard-gxi}
  and~\ref{fig:udrm-hard-gcj} illustrate the row vectors assigned to
  the vertices in the gadgets $G(x_i)$ and $G(C_j)$.}\sv{Using an exhaustive case analysis, one can show that
    $\comG(\matM)$ is indeed isomorphic to $G$, which concludes the
    proof of the theorem.\qedhere}
  \lv{It remains to show that $\comG(\matM)$ is indeed isomorphic to $G$.
  Let $u \in V(G)$, we distinguish the following cases:
  \begin{itemize}
  \item if $u=x_i$ for some $i$ with $1 \leq i \leq n$, we need to show that
    $N_{\comG(\matM)}(u)=N_G(u)=\{x_i^1,x_i^2,\bar{x}_i^1,\bar{x}_i^2\}$.
    Since $R(x_i)=(i,\bullet,0,0,0,0,0)$ is compatible with
    $R(x_i^1)=(i,1,\bullet,\bullet,0,0,0)$, $R(x_i^2)=(i,1,\bullet,0,\bullet,0,0)$,
    $R(\bar{x}_i^1)=(i,0,\bullet,\bullet,0,0,0)$, and $R(\bar{x}_i^2)=(i,0,\bullet,0,\bullet,0,0)$, we
    already have that $N_G(u) \subseteq N_{\comG(\matM)}(u)$. Moreover,
    $R(x_i)=(i,\bullet,0,0,0,0,0)$ is not compatible with:
    \begin{itemize}
    \item $R(x_{i'})$, $R(x_{i'}^o)$, or $R(\bar{x}_{i'}^o)$ for any $i'$ and $o$ with
      $1\leq i'\leq n$, $i'\neq i$, and $1\leq o \leq 2$ because the
      first column of these rows is equal to $i'$ and $i'\neq i$.
    \item $R(l_{j,r}^o)$ for any $j$, $r$, and $o$ with $1 \leq j \leq
      m$, $1 \leq r \leq 3$, and $1\leq o \leq 2$ because the third
      column of $R(l_{j,r}^o)$ is equal to $j$, and
      hence not equal to the corresponding column of $R(u)$, which is
      $0$.
    \item $R(h_j^o)$ for any $j$ and $o$ with $1\leq j \leq m$ and $1
      \leq o \leq 2$, because the third column of
      $R(h_j^o)$ is equal to $j$ and $j \neq 0$.
    \item $R(g_j^o)$ for any $j$ and $o$ with $1\leq j \leq 2n-m$ and $1
      \leq o \leq 2$, because the seventh column of $R(g_j^o)$ is
      equal to $j$ and $j \neq 0$.
    \end{itemize}
    This shows that $N_{\comG(\matM)}(u) \subseteq N_G(u)$ and hence
    $N_{\comG(\matM)}(u)=N_G(u)$, as required.
  \item if $u=x_i^1$ for some $i$ with $1 \leq i \leq n$, we need to show that
    $N_{\comG(\matM)}(u)=N_G(u)=\{x_i, x_i^2,l_{j,r}^1,l_{j,r}^2\}$,
    where $j$ and $r$ are such that $f^{-1}(x_i^1)=(j,r)$.
    Since $R(x_i^1)=(i,1,\bullet,\bullet,0,0,0)$ is compatible with
    $R(x_i)=(i,\bullet,0,0,0,0,0)$, $R(x_i^2)=(i,1,\bullet,0,\bullet,0,0)$,
    $R(l_{j,r}^1)=R(l_{j,r}^2)=(i,1,j,1,\bullet,\bullet,0)$, we
    already have that $N_G(u) \subseteq N_{\comG(\matM)}(u)$. Moreover,
    $R(x_i^1)=(i,1,\bullet,\bullet,0,0,0)$ is not compatible with:
    \begin{itemize}
    \item $R(x_{i'})$, $R(x_{i'}^o)$, or $R(\bar{x}_{i'}^o)$ for any $i'$ and $o$ with
      $1\leq i'\leq n$, $i'\neq i$, and $1\leq o \leq 2$ because the
      first column of these rows is equal to $i'$ and $i'\neq i$.
    \item $R(\bar{x}_i^o)$ for any $o$ with $1 \leq o \leq 2$ because
      the second column of $R(\bar{x}_i^o)$ is equal to $0$ and
      $0\neq 1$.
    \item $R(l_{j',r'}^o)$ for any $j'$, $r'$, and $o$ with $1 \leq j' \leq
      m$, $1 \leq r' \leq 3$, $1\leq o \leq 2$, and $(j',r')\neq
      (j,r)$ because either:
      \begin{itemize}
      \item the $r'$ literal of $C_j$ is not $x_i$, in
        which case either the first or second column of $R(u)$ and
        $R(l_{j',r'}^o)$ differ or
      \item the $r'$ literal of $C_j$ is $x_i$, in
        which case $f(j',r')=x_i^2$ and the fifth column of
        $R(l_{j',r'}^o)$ is equal to $1$, and hence not equal to the
        fifth column of $R(u)$, which is $0$.
      \end{itemize}
    \item $R(h_j^o)$ for any $j$ and $o$ with $1\leq j \leq m$ and $1
      \leq o \leq 2$, because the fifth column of
      $R(h_j^o)$ is equal to $1$.
    \item $R(g_j^o)$ for any $j$ and $o$ with $1\leq j \leq 2n-m$ and $1
      \leq o \leq 2$, because the seventh column of $R(g_j^o)$ is
      equal to $j$ and $j \neq 0$.
    \end{itemize}
    This shows that $N_{\comG(\matM)}(u) \subseteq N_G(u)$ and hence
    $N_{\comG(\matM)}(u)=N_G(u)$, as required.
  \item the cases for $u\in \{x_i^2,\bar{x}_i^1,\bar{x}_i^2\}$ are
    analogous to the previous case,
  \item if $u=l_{j,r}^1$ for some $j$ and $r$ with $1\leq j \leq
    m$ and $1\leq r \leq 3$, we have to show that
    $N_{\comG(\matM)}(u)=N_G(u)=\{l_{j,r}^2, f(j,r), h_j^1,h_j^2\}$.
    Assume w.l.o.g. that $f(j,r)=x_i^1$ for some $i$ with $1
    \leq i \leq n$. Then since $R(l_{j,r}^1)=(i,1,j,1,\bullet,\bullet,0)$ is compatible with
    $R(l_{j,r}^2)=R(u)$, $R(x_i^1)=(i,1,\bullet,\bullet,0,0,0)$,
    $R(h_j^1)=(\bullet,\bullet,j,1,1,1,\bullet)$, and $R(h_j^2)=(\bullet,\bullet,j,1,1,2,\bullet)$,
    we
    already have that $N_G(u) \subseteq N_{\comG(\matM)}(u)$. Moreover,
    $R(l_{j,r}^1)=(i,1,j,1,\bullet,\bullet,0)$ is not compatible with:
    \begin{itemize}
    \item $R(x_i)=(i,\bullet,0,0,0,0,0)$ and $R(x_i^2)=(i,1,\bullet,0,\bullet,0,0)$ because
      of the fourth column,
    \item $R(\bar{x}_i^o)$ for any $o$ with $1 \leq o \leq 2$ because of the second column,
    \item $R(x_{i'})$, $R(x_{i'}^o)$, or $R(\bar{x}_{i'}^o)$ for any $i'$ and $o$ with
      $1\leq i'\leq n$, $i'\neq i$, and $1\leq o \leq 2$ because the
      first column of these rows is equal to $i'$ and $i'\neq i$.
    \item $R(l_{j',r'}^o)$ for any $j'$, $r'$, and $o$ with $1 \leq j'
      \leq m$, $1\leq r' \leq 3$, $(j',r')\neq (j,r)$, and $1\leq o
      \leq 2$ because either:
      \begin{itemize}
      \item $j'\neq j$ and they hence differ in the third column, or
      \item $j'=j$ but $r'\neq r$ in which case they differ in the
        first or second column because the clause $C_j$ cannot
        contain the same literal ($x_i$) twice.
      \end{itemize}
    \item $R(h_{j'}^o)$ for any $j'$ and $o$ with $1\leq j' \leq m$,
      $j'\neq j$, and $1
      \leq o \leq 2$, because the third column of
      $R(h_{j'}^o)$ is equal to $j'$ and $j'\neq j$.
    \item $R(g_{j'}^o)$ for any $j'$ and $o$ with $1\leq j' \leq 2n-m$ and $1
      \leq o \leq 2$, because the seventh column of $R(g_{j'}^o)$ is
      equal to $j'$ and $j'\neq 0$.
    \end{itemize}
  \item if $u=h_j^o$ for some $j$ and $o$ with $1\leq j \leq
    m$ and $1\leq o \leq 2$, we have to show that
    $N_{\comG(\matM)}(u)=N_G(u)=\{l_{j,r}^1,l_{j,r}^2\SM 1\leq r \leq
    3\SB \cup \SM g_{j'}^1,g_{j'}^2\SM 1\leq j'\leq 2n-m\SE$.
    Then since $R(h_j^o)=(\bullet,\bullet,j,1,1,o,\bullet)$ is compatible with
    $R(l_{j,r}^{o'})$ for any $r$ and $o'$ with $1 \leq r \leq 3$ and
    $1\leq o' \leq 2$ -- this is because the third column of
    $R(l_{j,r}^o)$ is always equal to $j$, the fourth and fifth
    columns are always either $\bullet$ or $1$, and the sixth column is
    always equal to $\bullet$ -- and $R(h_j^o)$ is also
    clearly compatible with $R(g_{j'}^1)=R(g_{j'}^2)=(\bullet,\bullet,\bullet,\bullet,\bullet,\bullet,j')$
    (for every $j'$ with $1 \leq j'\leq 2m-n$),
    we
    already have that $N_G(u) \subseteq N_{\comG(\matM)}(u)$. Moreover,
    $R(h_j^o)=(\bullet,\bullet,j,1,1,o,\bullet)$ is not compatible with:
    \begin{itemize}
    \item $R(x_i)$, $R(x_i^{o'})$, and $R(\bar{x}_i^{o'})$ for any $i$ and
      $o'$ with $1 \leq i \leq n$ and $1 \leq o' \leq 2$ because either
      the fourth or the fifth column of all these rows is equal to $0$
      and $0\neq 1$.
    \item $R(l_{j',r'}^{o'})$ for any $j'$, $r'$, and $o'$ with $1 \leq
      j' \leq m$, $j'\neq j$, $1 \leq r' \leq 3$ and $1 \leq o' \leq 2$
      because the third column of all these rows is $j'$ and $j'\neq
      j$,
    \item $R(h_j^{o-1\mod 2+1})$ because of the sixth column,
    \item $R(h_{j'}^{o'})$ for any $j'$ and $o'$ with $1 \leq j' \leq
      m$ and $1 \leq o'\leq 2$ because the third column of all these
      vectors is $j'$ and $j'\neq j$,
    \end{itemize}
  \item if $u=g_j^o$ for some $j$ and $o$ with $1\leq j \leq
    2n-m$ and $1\leq o \leq 2$, we have to show that
    $N_{\comG(\matM)}(u)=N_G(u)=\{g_j^{o-1\mod 2+1}\}\cup \SB h_{j'}^1,h_{j'}^2\SM 1\leq j' \leq
    m\SE$.
    We have already shown in the previous case that $g_j^o$ is
    adjacent to all vertices in
    $\SB h_{j'}^1,h_{j'}^2\SM 1\leq j' \leq
    m\SE$. Moreover, since $R(g_j^o)=R(g_j^{o-1\mod
      2+1})=(\bullet,\bullet,\bullet,\bullet,\bullet,\bullet,j)$ we obtain that
    $N_G(u) \subseteq N_{\comG(\matM)}(u)$. Moreover,
    $R(g_j^o)=(\bullet,\bullet,\bullet,\bullet,\bullet,\bullet,j)$ is not compatible with any other row
    because the seventh column of any other row is not equal to $j$.\qedhere
  \end{itemize}
  }
\end{proof}

We conclude this section with a hardness result showing that
$2$-\textsc{DRMC} remains \NP-hard when the number of missing or known
entries in each column/row is bounded.
\begin{THE}
  \label{thm:drmchard}
  The restriction of $2$-\textsc{DRMC} to instances in which each row
  and each column contains exactly three missing entries is \NP\hy
  hard. The same holds for the restriction of $2$-\textsc{DRMC} to
  instances in which each row and each column contains at most 4 determined entries.
\end{THE}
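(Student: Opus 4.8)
The plan is to route both claims through the equivalence $2$-\textsc{DRMC} $\equiv$ \PIC{} on the compatibility graph (Lemma~\ref{lem:drm-cc}) and, where convenient, the equivalence \PIC{} $\equiv$ \PIT{} on $K_4$-free graphs (Observation~\ref{obs:pit-pic}), and then to exhibit, for a suitably structured source graph, a $\GF(2)$ matrix $\matM$ whose compatibility graph $\comG(\matM)$ realizes that graph while meeting the prescribed sparsity constraint. The governing observation I would rely on throughout is that two rows are compatible exactly when they agree on every coordinate where both are determined; hence a single column in which precisely two rows carry determined values $0$ and $1$ (and all other rows carry $\bullet$) creates exactly one non-edge of $\comG(\matM)$ and introduces no other interaction.

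For the bounded-\emph{determined}-entries statement I would reduce from \textsc{Coloring} restricted to graphs of maximum degree~$4$, which is \NP-hard already for three colors. Given such a graph $H$, I build $\matM$ over $\GF(2)$ with one row per vertex of $H$ and one column per edge of $H$: for an edge $e=\{u,v\}$ I set $\matM[u,e]=0$, $\matM[v,e]=1$, and $\matM[w,e]=\bullet$ for every $w\notin\{u,v\}$. By the governing observation $\comG(\matM)\cong\overline H$, so by Lemma~\ref{lem:drm-cc} the matrix can be completed to at most $t$ distinct rows iff $\overline H$ has a clique cover of size $t$, i.e.\ iff $H$ is $t$-colorable; taking $t=3$ yields \NP-hardness. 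Every column then has exactly two determined entries and every row $v$ has exactly $\deg_H(v)\le 4$ determined entries, establishing the second sentence. (Distinctness of rows is automatic, since distinct vertices have distinct edge-incidences.)

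For the exactly-three-\emph{missing}-entries statement I would again reduce from \PIT{}, exploiting that the graphs $G$ produced from \TSATT{} instances in the proof of Theorem~\ref{thm:unbounded-drm-np} are $K_4$-free and of bounded degree. The regime is now dual to the previous one: since rows are almost entirely determined, two compatible rows are forced to be near-identical, differing only within the union of their (at most six) $\bullet$-positions. I would therefore encode $G$ by a sparse gadget matrix over $\GF(2)$ in which each intended non-edge is witnessed by a single local determined coordinate, while each intended edge is created by having its two endpoints conflict in exactly one coordinate that is then erased by a shared $\bullet$; finally I would append dummy rows and dummy columns whose $\bullet$-pattern neither creates nor destroys compatibilities, so as to raise the number of missing entries in every row and every column to exactly three. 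Observation~\ref{obs:pit-pic} together with Lemma~\ref{lem:drm-cc} then gives that $\comG(\matM)$ admits the required clique cover iff $G$ has a triangle partition iff $\phi$ is satisfiable.

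The hard part will be this last construction. Unlike the clean complement-realization used for the determined-entries bound, the near-identical regime makes it delicate to realize a bounded-degree adjacency pattern so that (a) each edge corresponds to a distance-one conflict removable by a single $\bullet$ while every non-edge still conflicts on some determined coordinate, and (b) the exact regularity ``three missing entries in every row and every column'' holds after padding. I expect the bulk of the proof to be an exhaustive case analysis over the gadget rows, in the style of the verification in Theorem~\ref{thm:unbounded-drm-np}, confirming both that $\comG(\matM)$ is isomorphic to $G$ and that the padding leaves the optimum number of distinct rows unchanged.
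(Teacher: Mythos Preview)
Your argument for the ``at most four determined entries'' clause is correct and is essentially what the paper does: reduce from $3$-\textsc{Coloring} on graphs of maximum degree~$4$ and realize $\overline H$ as the compatibility graph via Lemma~\ref{lem:drm-cc}. Your vertex--edge incidence matrix is a clean instantiation of this.

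For the ``exactly three missing entries'' clause you are missing the key simplification, and the route you sketch is not a proof. You propose to realize a \emph{sparse} $K_4$-free graph $G$ directly as $\comG(\matM)$ with only three $\bullet$'s per row. In that regime any two compatible rows agree outside at most six coordinates, so all rows in a connected component of $G$ must be near-identical; you then have to plant, for every one of the many non-edges inside such a component, a determined coordinate that separates just that pair, while simultaneously hitting the exact column regularity. You do not actually construct this, and the padding step is not innocuous: a dummy row with only three $\bullet$'s is compatible with almost nothing, so it cannot be absorbed into an existing clique of $\comG(\matM)$ without further argument, and dummy rows you add in triples to form their own triangles shift $t$ and still need their column-$\bullet$ counts controlled. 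None of this is carried out.

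The paper avoids all of it by passing to complements. \textsc{PIT} is \NP-hard on $K_4$-free cubic graphs $H$; set $G=\overline H$, which is $(n-4)$-regular and has no independent set of size~$4$, so by Observation~\ref{obs:pit-pic} \textsc{PIT} on $H$ is equivalent to $(n/3)$-coloring $G$. Now use the symmetric Peeters matrix: $\matM$ is $n\times n$ with $\matM[i,i]=1$, $\matM[i,j]=0$ if $ij\in E(G)$, and $\matM[i,j]=\bullet$ otherwise. One checks that rows $i,j$ are incompatible iff $ij\in E(G)$, so $\comG(\matM)=\overline G$ and by Lemma~\ref{lem:drm-cc} the instance $(\matM,n/3)$ is equivalent to $(n/3)$-coloring $G$. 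Because $G$ is $(n-4)$-regular the number of $\bullet$'s in every row is exactly $n-1-(n-4)=3$, and since $\matM$ is symmetric the same holds for every column---no padding is needed. That single idea (complement a cubic $K_4$-free graph and use the symmetric $n\times n$ matrix) is what your proposal is missing.
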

\sv{
\begin{proof}[Proof Sketch.]

  Consider the problem of (properly) coloring a graph on $n$ vertices, having minimum degree $n-4$ and no independent set of size 4, by $n/3$ colors, where $n$ is divisible by 3;  denote this problem as \coloring{}.
  This problem is \NP-hard via a reduction from the {\sc Partition into Triangles} problem on $K_4$-free cubic graphs. The \NP-hardness of the latter problem follows from the \NP-hardness of the {\sc Partition into Triangles} problem on planar cubic graphs~\cite{reed}, since a $K_4$ in a cubic graph must be isolated, and hence can be removed from the start. Finally, using Observation~\ref{obs:pit-pic}, {\sc Partition into Triangles} on $K_4$-free cubic graphs is polynomial-time reducible to \coloring{}, via the simple reduction that complements the edges of the graph.

   Now we reduce from \coloring{} to \bDRM{} by mimicking a standard reduction from 3-coloring to rank minimization~\cite{peeters96}. Given an instance $G$ of \coloring{}, we construct an $n \times n$ matrix $\matM$  whose rows and columns correspond to the vertices in $G$, as follows. The diagonal entries of $\matM$ are all ones. For an entry at row $i$ and column $j$, where $i \neq j$, $\matM[i, j]=0$ if $ij \in E(G)$, and is $\bullet$ otherwise. Finally, we set $r=n/3$. Observe that since each vertex in $G$ has $n-4$ neighbors, the number of missing entries in any row and any column of $\matM$ is 3.

It is not difficult to show that $G$ is a yes-instance of \coloring{} if and only if $(\matM, n/3)$ is a yes-instance of $2$-\textsc{DRMC}.

The second statement in the theorem follows via a reduction from {\sc 3-Coloring} on graphs of maximum degree at most 4~\cite{gjs76}, using similar arguments.
\end{proof}
}

\lv{
\begin{proof}

  Consider the problem of (properly) coloring a graph on $n$ vertices, having minimum degree $n-4$ and no independent set of size 4, by $n/3$ colors, where $n$ is divisible by 3;  denote this problem as \coloring{}.

  We first show that this problem is \NP-hard via a reduction from
  the {\sc Partition into Triangles} problem on $K_4$-free cubic graphs. The \NP-hardness of the aforementioned problems follows from the \NP-hardness of the {\sc Partition into Triangles} problem on planar cubic graphs~\cite{reed}, since a $K_4$ in a cubic graph must be isolated, and hence can be removed from the start.

Next, observe that {\sc Partition into Triangles} on $K_4$-free cubic graphs is polynomial-time reducible to \coloring{}, via the simple reduction that complements the edges of the graph. Given a $K_4$-free cubic graph $G$, the (edge) complement of $G$, $\overline{G}$, has minimum degree $n-4$, and contains no independent set of size more than 3. This can be seen as follows. Clearly, if $G$ can be partitioned into triangles then $\overline{G}$ has an $(n/3)$-coloring. Conversely, if $\overline{G}$ has an $(n/3)$-coloring, then since $\overline{G}$ has no independent set of size more than 3, each of the $n/3$ color classes in $\overline{G}$ contains exactly 3 vertices. Therefore, $G$ can be partitioned into triangles.

Finally, we reduce from \coloring{} to \bDRM{} by mimicking a standard reduction from 3-coloring to rank minimization~\cite{peeters96}. Given an instance $G$ of \coloring{}, we construct an $n \times n$ matrix $\matM$  whose rows and columns correspond to the vertices in $G$, as follows. The diagonal entries of $\matM$ are all ones. For any entry at row $i$ and column $j$, where $i \neq j$, $\matM[i, j]=0$ if $ij \in E(G)$, and is $\bullet$ otherwise. Finally, we set $r=n/3$. Observe that since each vertex in $G$ has $n-4$ neighbors, the number of missing entries in any row of $\matM$ is 3. Since $\matM$ is symmetric, the number of missing entries in any column of $\matM$ is 3 as well.

If $G$ is a yes-instance of \coloring{}, then $G$ can be partitioned into $n/3$ independent sets, each of size exactly 3. We claim that the missing entries in $\matM$ can be filled so that the total number of identical rows is at most $n/3$. To shows this, it suffices to show that the three rows corresponding to the vertices in any of the $n/3$ independent sets in $G$ can be filled so that to produce identical rows. To do so, consider such an independent set $\{u, v, w\}$ in $G$, and for convenience, denote their corresponding rows in $\matM$ by $u, v, w$, respectively. Since $u, v, w$ is an independent set, the two entries $\matM[v,u]$ and $\matM[w, u]$ are $\bullet$. Similarly, we have $\matM[u,v] = \matM[w, v]=$$\bullet$ and $\matM[u,w] = \matM[v, w]=$ $\bullet$. (That is, the only entry containing ``1'' in any of the three rows has corresponding entries in the other two rows that are $\bullet$.) Therefore, if we replace all these six missing entries with 1, and all other missing entries with 0, we make the three rows $u, v, w$ identical.

To prove the converse, suppose that the missing entries in $\matM$ can be completed so to obtain at most $n/3$ identical rows. Note that, for any two adjacent vertices $u, v$, their rows cannot be completed into identical rows since $u$ has 0 at column $v$, whereas $v$ has 1. Therefore, all rows that are completed into the same identical row correspond to an independent set in $G$. Since $G$ has no independent set of size more than 3, and since the number of identical rows in the completed matrix is at most $n/3$, it follows that the number of rows that have been completed into the same row is exactly 3, and those correspond to an independent set of size 3 in $G$. Therefore, $G$ is a yes-instance of \coloring{}.

The second statement in the theorem follows via a reduction from {\sc 3-Coloring} on graphs of maximum degree at most 4, which is known to be \NP-hard~\cite{gjs76}, using similar arguments.
\end{proof}
}
\section{Conclusion}
\label{sec:conclusion}
We studied the parameterized complexity of two fundamental matrix
completion problems under several parameterizations. For the bounded
domain case, we painted a positive picture by showing that the two
problems are in \FPT{} (resp.~\RFPT{}) w.r.t.~all considered parameters. For the
unbounded domain case, we characterized the parameterized complexity
of \uDRM{} by showing that it is in \FPT{} parameterized by \row, and \paraNP-hard{} parameterized by \col{} (and hence by \comb). For \uRM, we could show its membership in \XP{} (resp.~\RXP{}) w.r.t.~all considered parameters. Three immediate open questions ensue:
\begin{itemize}
\item Is it possible to obtain a deterministic algorithm for \bRM{} and
  \uRM{} parameterized by \comb{}?
\item Can we improve our \XP{} (resp. \RXP{}) results for \uRM{} to
  \FPT{} (resp. \RFPT{}) or show that the problems are \W{1}\hy hard?
\item Does a hardness result, similar to the one given in
  Theorem~\ref{thm:drmchard} for \bDRM{}, hold for \bRM{}?
\end{itemize}


\bibliography{literature}
\bibliographystyle{icml2018}
\end{document}